\DeclareMathOperator{\Tr}{Tr}
\DeclareMathOperator{\diag}{diag}
\DeclareMathOperator*{\argmax}{arg\,max} 
\newtheorem{remark}{Remark}
\newtheorem{proposition}{Proposition}
\def\BibTeX{{\rm B\kern-.05em{\sc i\kern-.025em b}\kern-.08em
    T\kern-.1667em\lower.7ex\hbox{E}\kern-.125emX}}
\begin{document}

\title{Toward Efficient and Privacy-Aware eHealth Systems: An Integrated Sensing, Computing, and Semantic Communication Approach}

\author{Yinchao Yang, Yahao Ding, Zhaohui Yang, Chongwen Huang, Zhaoyang Zhang, Dusit Niyato, \IEEEmembership{Fellow, IEEE}, and Mohammad Shikh-Bahaei,
\IEEEmembership{Senior Member, IEEE}
\thanks{Yinchao Yang, Yahao Ding, and Mohammad Shikh-Bahaei are with the Department of Engineering, King's College London, London, UK. (emails: yinchao.yang@kcl.ac.uk; yahao.ding@kcl.ac.uk; m.sbahaei@kcl.ac.uk).}
\thanks{Dusit Niyato is with the College of Computing and Data Science, Nanyang Technological University, Singapore 639798, Singapore (email: dniyato@ntu.edu.sg).}
\thanks{Zhaohui Yang, Chongwen Huang and Zhangyang Zhang are with the College of Information Science and Electronic Engineering, Zhejiang University, Hangzhou, Zhejiang 310027, China, and Zhejiang Provincial Key Lab of Information Processing, Communication and Networking (IPCAN), Hangzhou, Zhejiang, 310007, China. (emails: yang\_zhaohui@zju.edu.cn; chongwenhuang@zju.edu.cn; ning\_ming@zju.edu.cn).}
}

\markboth{Journal of \LaTeX\ Class Files,~Vol.~18, No.~9, September~2020}%
{How to Use the IEEEtran \LaTeX \ Templates}
\maketitle

\begin{abstract} 
Real-time and contactless monitoring of vital signs, such as respiration and heartbeat, alongside reliable communication, is essential for modern healthcare systems, especially in remote and privacy-sensitive environments. Traditional wireless communication and sensing networks fall short in meeting all the stringent demands of eHealth, including accurate sensing, high data efficiency, and privacy preservation. To overcome the challenges, we propose a novel integrated sensing, computing, and semantic communication (ISCSC) framework. In the proposed system, a service robot utilises radar to detect patient positions and monitor their vital signs, while sending updates to the medical devices. Instead of transmitting raw physiological information, the robot computes and communicates semantically extracted health features to medical devices. This semantic processing improves data throughput and preserves the clinical relevance of the messages, while enhancing data privacy by avoiding the transmission of sensitive data. Leveraging the estimated patient locations, the robot employs an interacting multiple model (IMM) filter to actively track patient motion, thereby enabling robust beam steering for continuous and reliable monitoring. We then propose a joint optimisation of the beamforming matrices and the semantic extraction ratio, subject to computing capability and power budget constraints, with the objective of maximising both the semantic secrecy rate and sensing accuracy. Simulation results validate that the ISCSC framework achieves superior sensing accuracy, improved semantic transmission efficiency, and enhanced privacy preservation compared to conventional joint sensing and communication methods. 
\end{abstract}

\begin{IEEEkeywords}
Integrated sensing and communication, semantic communication, and vital sign detection.
\end{IEEEkeywords}

\IEEEpeerreviewmaketitle

\section{Introduction}

\IEEEPARstart{C}{ontinuous} and real-time monitoring of vital signs, such as respiration and heartbeat, plays a pivotal role in the early detection and prevention of potentially life-threatening conditions \cite{li2024passive}. Timely identification of abnormalities like irregular heartbeats is essential, as they often serve as early indicators of cardiovascular disorders \cite{zhai2022contactless}. Detecting these irregularities at an early stage enables prompt medical intervention, reducing the risk of severe complications and improving patient outcomes.

Vital sign monitoring methods can be broadly classified into two categories: contact-based and contactless. Traditional contact-based approaches, such as smartwatches and multi-parameter monitors, require direct physical contact with the body, which can cause discomfort and inconvenience for patients \cite{zhang2024single}.
In contrast, contactless vital sign detection has gained considerable attention due to its ability to provide non-intrusive monitoring while preserving patient privacy \cite{wang2021mmhrv}. These systems utilise wireless signals, such as millimetre-wave (mmWave), to detect subtle physiological movements, including respiration, heartbeat, and other body motions, by analysing signal reflections from the human body \cite{liu2024diversity, fioranelli2021contactless}.
The integration of contactless monitoring into eHealth systems offers significant advantages for remote healthcare, telemedicine, and personalised health tracking, positioning contactless sensing as a key driver in the evolution of next-generation intelligent healthcare solutions \cite{manimegalai2023exploring}.

Seamless data exchange, both within and between healthcare facilities, is also fundamental to the effectiveness of eHealth systems. Specifically, the reliable transmission of data during continuous sensing operations is paramount for delivering effective and responsive healthcare \cite{dolas2024integrated}. Integrated sensing and communication (ISAC) presents a promising paradigm by unifying wireless communication and contactless sensing functionalities within a single system \cite{wen2024survey, meng2024cooperative, zhu2024enabling,qi2024joint}. This dual capability allows systems to concurrently support both communication and sensing tasks, thereby enhancing system efficiency and spectrum utilisation \cite{liu2022integrated,qi2022integrating,qi2020integrated, olfat2008optimum, bobarshad2010low, nehra2010cross, nehra2010spectral, jia2020channel, kobravi2007cross}.

While ISAC optimises spectrum utilisation by enabling dual functionality in sensing and communication, next-generation healthcare systems demand more intelligent and efficient data transmission mechanisms \cite{pathak2022sembox,long2025survey,guo2025survey}. Conventional wireless communication frameworks typically emphasise the reliable delivery of raw data, which can lead to excessive bandwidth consumption and the transmission of redundant or non-informative content. In contrast, modern eHealth applications require semantically aware communication strategies that prioritise clinically meaningful information, thereby enabling timely and accurate medical decision-making \cite{dolas2024integrated}. Semantic communication addresses this need by shifting the focus from bit-level accuracy to the transmission of relevant semantic content \cite{yang2023energy,yang2023secure,wang2025generative}. By leveraging knowledge bases (KBs), such as electronic health records (EHRs) which contain patient medical records, semantic communication enables medical devices to convey high-level meanings instead of transmitting raw physiology or clinical data \cite{11096081}. For example, instead of transmitting the exact message ``Patient heart rate increased from 100 BPM to 150 BPM,'' a semantically aware system may send ``Patient heart rate increased sharply,'' thereby reducing data overhead while preserving essential clinical meaning. The exact data are stored in EHRs and shared across medical devices and, potentially, interconnected healthcare facilities. This provides a complete view of a patient’s health over time, helping to create a more efficient and responsive healthcare system \cite{rahman2024next}. Semantic communication has attracted considerable research attention in recent years. For instance, the authors in \cite{peng2024robust} proposed a machine learning-based semantic text communication system that is robust against physical channel noise and literal semantic impairments, such as misinterpretation of contextual meaning and word ambiguity. Their results demonstrated enhanced interpolation at the receiver with reduced inference time. In \cite{evgenidis2024hybrid}, a hybrid semantic-Shannon communication system was introduced, allowing each subcarrier to dynamically select between semantic and Shannon communication modes to optimise both delay and accuracy requirements. Furthermore, the authors in \cite{hu2024semantic} explored the use of relays as edge servers to deliver semantic communication services for both semantic users with abundant computational resources and conventional users with limited capabilities. Their results showed improved communication performance while minimising the use of resources such as bandwidth and transmit power. Nevertheless, despite these notable advancements, current research predominantly targets semantic communication enhancements in general wireless networks, while the exploration of semantic communication applications within the eHealth domain remains relatively limited \cite{madhavi2025iot, karahan2025towards, yuan2024universal}.

Although ISAC and semantic communication each demonstrate considerable potential individually, there remains a critical gap in research exploring their integration for eHealth. In an eHealth setting, joint sensing and semantic communication can enhance both the accuracy of patient monitoring and the efficiency of medical data transmission \cite{11096081}. For instance, radar sensors can detect subtle physiological movements \cite{wang2023vital}, while semantic processing can intelligently extract and transmit only the most clinically relevant data \cite{athama2024semantic}, minimising network congestion and enabling faster medical decision-making. Another significant advantage of integrating ISAC and semantic communication is the enhancement of privacy and security for patient data.  Privacy and security are especially critical in healthcare systems due to the sensitive nature of medical information. Although several recent works \cite{fan2024authentic, ahmed2025privacy, mpembele2023communication, romeo2024privacy} have explored privacy-preserving mechanisms for eHealth, those works overlook the potential of semantic-level protection. Medical information, including health records, diagnostic images, and treatment plans, is highly sensitive to unauthorised access. In ISAC systems, such data may be exposed to eavesdropping \cite{manimegalai2023exploring}. While precoding techniques can enhance physical layer security, their effectiveness typically relies on non-line-of-sight (NLoS) conditions, or wiretap channel models, where the eavesdropper experiences degraded channel quality. Semantic communication further reinforces data protection by eliminating the transmission of raw physiological signals. Instead, successful decoding depends on access to shared KBs. When KBs are managed in compliance with regulations such as the general data protection regulation (GDPR), unauthorised access is effectively prevented \cite{azam2023modelling}. Consequently, even if semantic information is intercepted during transmission, eavesdroppers are unable to reconstruct meaningful content without the corresponding KBs.

To bridge the research gap concerning the integration of ISAC and semantic communication in healthcare systems, we propose a novel framework that effectively integrates these technologies to enable advanced healthcare applications. We refer to the proposed framework as \textbf{integrated sensing, computing, and semantic communication (ISCSC)}. 
Through sensing-assisted semantic communication, the ISCSC framework utilises environmental sensing to intelligently select and prioritise medical data for efficient transmission. This method not only significantly enhances data rates but also safeguards patient privacy by preventing data transmission to unintended recipients. Conversely, semantic communication-assisted sensing facilitates real-time acquisition and transmission of vital signs and medical information, enabling timely anomaly detection and further reinforcing data privacy by eliminating the need to transmit raw medical data. A comparison of ISCSC and ISAC performance in sensing and communication is illustrated in Fig.~\ref{Pareto}.

\begin{figure}
    \centering
    \includegraphics[width=\linewidth]{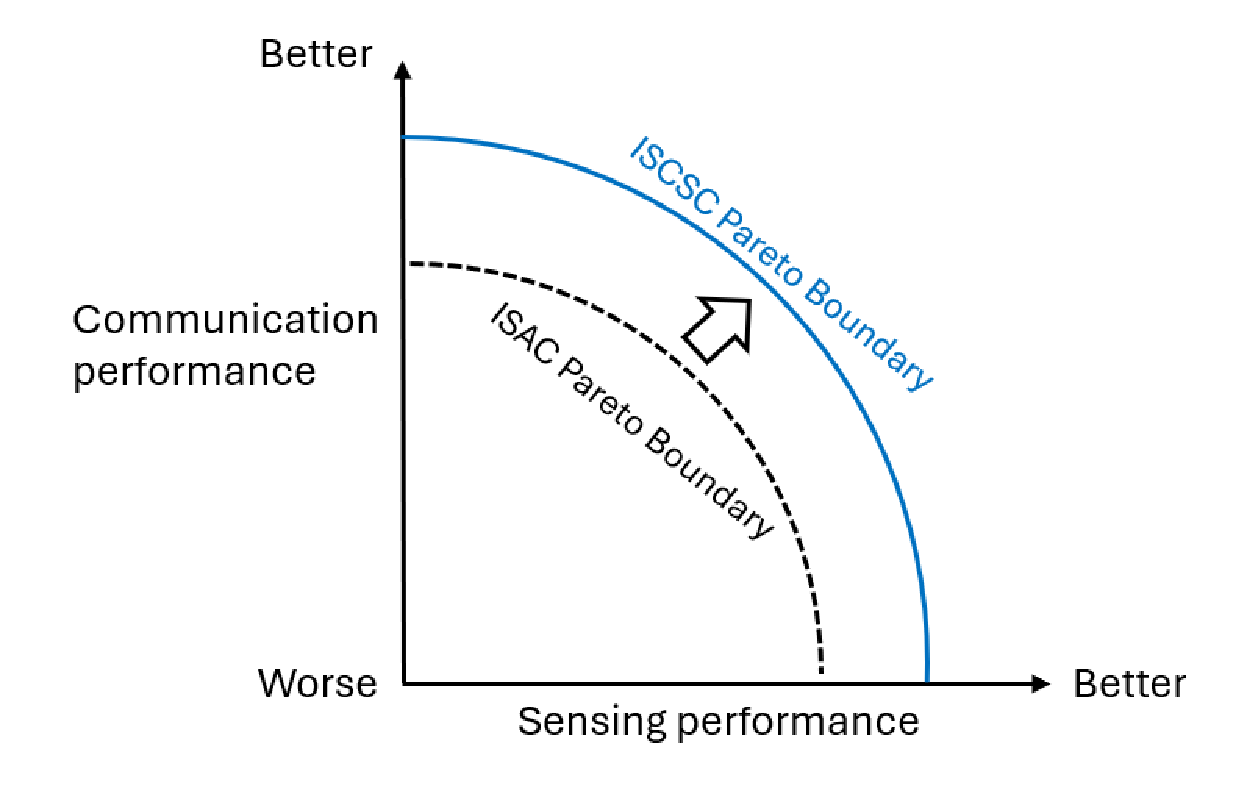}
    \caption{An illustration of the performance Pareto boundary of ISAC and ISCSC systems.}
    \label{Pareto}
\end{figure}

In summary, the key contributions of this paper are as follows:

\begin{enumerate}
    \item We propose a novel ISCSC framework specifically designed for eHealth applications, with an emphasis on vital sign detection. The proposed ISCSC design enhances data rate and privacy while preserving the same sensing accuracy as the ISAC design.

    \item We developed a privacy-aware predictive beamforming strategy to enable simultaneous downlink semantic communication to medical devices and sensing of patients. This strategy ensures that communication signals remain confidential and are not accessible to unintended recipients (e.g., patients being sensed). 
    
    \item To ensure accurate sensing in dynamic environments, the system employs an interacting multiple model (IMM) filter for real-time prediction of patient motion \cite{mazor2002interacting}. This enables dynamic beam steering to track patient positions, significantly improving the reliability and continuity of vital sign monitoring.
\end{enumerate}

The remainder of this paper is organised as follows. Section II presents the system model of the ISCSC framework. Section III details the patient motion models and the associated tracking techniques. Section IV defines the performance metrics that are used to evaluate the ISCSC system. Section V formulates the optimisation problem and introduces the proposed algorithms. Section VI provides simulation results and performance analysis, and Section VII concludes the paper. The main symbols used in this paper are summarised in Table \ref{Table of symbols}.

\begin{table}[t]
\footnotesize
    \centering
    \caption{List of main symbols.}
    \label{Table of symbols}
        \begin{tabular}{|l||l|}
        \hline
        \textbf{Symbol} & \textbf{Description}  \\
        \hline
        $K$ & Number of medical devices \\
        \hline
        $\mathcal{K}$ & Set of medical devices \\
        \hline
        $L$ & Number of patients \\
        \hline
        $\mathcal{L}$ & Set of patients \\
        \hline
        $\mathbf{x}(t)$ & Transmitted signal from the service robot \\
        \hline
        $\mathbf{c}(t)$ & Communication signal with semantic information \\
        \hline
        $\mathbf{z}(t)$ & Sensing signal \\
        \hline
        $\mathbf{w}_k$ & Transmit beamforming vector of medical device $k$ \\
        \hline
        $\mathbf{W}$ & Transmit beamforming matrix of all medical devices \\
        \hline
        $\mathbf{r}_l$ & Transmit beamforming vector of patient $l$ \\
        \hline
        $\mathbf{R}$ & Transmit beamforming matrix of all patients \\
        \hline
        $\mathbf{R}_x$ & Covariance matrix of the transmit signal \\
        \hline
        $y_k(t)$ & Received signal of medical device $k$ \\
        \hline
        $\mathbf{h}_k$ & Channel between the robot and medical device $k$ \\
        \hline
        $y_l(t)$ & Received signal of patient $l$ \\
        \hline
        $\mathbf{h}_l$ & Channel between the robot and patient $l$ \\
        \hline
        $\sigma^2_c, \sigma^2_r$ & Communication noise \\
        \hline
        $f_{r,l}, f_{h,l}$ & Respiration and heartbeat frequencies of patient $l$ \\
        \hline
        $A_{r,l}, A_{h,l}$ & Respiration and heartbeat amplitudes of patient $l$ \\
        \hline
        $m_{r,l}(t)$ & Respiration signal of patient $l$ \\
        \hline
        $m_{h,l}(t)$ & Heartbeat signal of patient $l$ \\
        \hline
        $\mathbf{\hat{y}}_l(t)$ & Echo signal \\
        \hline
        $\phi (t)$ & Instantaneous phase of the echo signal\\
        \hline
        $\mathbf{q}_{l, t}$ & State vector of a patient \\
        \hline
        $\mathbf{g}_1(\cdot), \mathbf{g}_2(\cdot)$ & State transition and Measurement functions \\
        \hline
        $\mathbf{u}_t, \mathbf{z}_t$ & State transition and Measurement noise vectors \\
        \hline
        $\rho_k$ & Semantic extraction ratio of medical device $k$ \\
        \hline
        $P_{\text{comp}}$ & Computing power required for semantic extraction \\ 
        \hline
        $P_{\text{c\&s}}$ & Signal transmission power \\ 
        \hline
        \end{tabular}
\end{table}

\subsection*{List of Notations:}
Capital boldface letters denote matrices, while lowercase boldface letters denote vectors. Scalars are represented by standard lowercase or uppercase letters. The set of complex numbers is denoted by $\mathbb{C}$, with $\mathbb{C}^{n \times 1}$ and $\mathbb{C}^{m \times n}$ representing an $n$-dimensional complex vector and an $m \times n$ complex matrix, respectively. The identity matrix is denoted by $\mathbf{I}$, and the all-zero matrix by $\mathbf{0}$. The superscripts $(\cdot)^T$ and $(\cdot)^H$ represent the transpose and Hermitian (conjugate transpose), respectively. The trace and rank of a matrix are denoted by $\Tr(\cdot)$ and $\text{rank}(\cdot)$, respectively. The expectation operator is denoted by $\mathbb{E}[\cdot]$, and $|\cdot|$ represents the absolute value or magnitude. The notation $\succeq$ indicates positive semi-definiteness. The operators $\Re\{\cdot\}$ and $\Im\{\cdot\}$ extract the real and imaginary parts, respectively. The complex Gaussian distribution with zero mean and variance $\sigma^2$ is denoted by $\mathcal{CN}(0,\sigma^2)$.

\section{System Model}

\begin{figure*}[!t]
    \centering
    \includegraphics[width=0.8\linewidth]{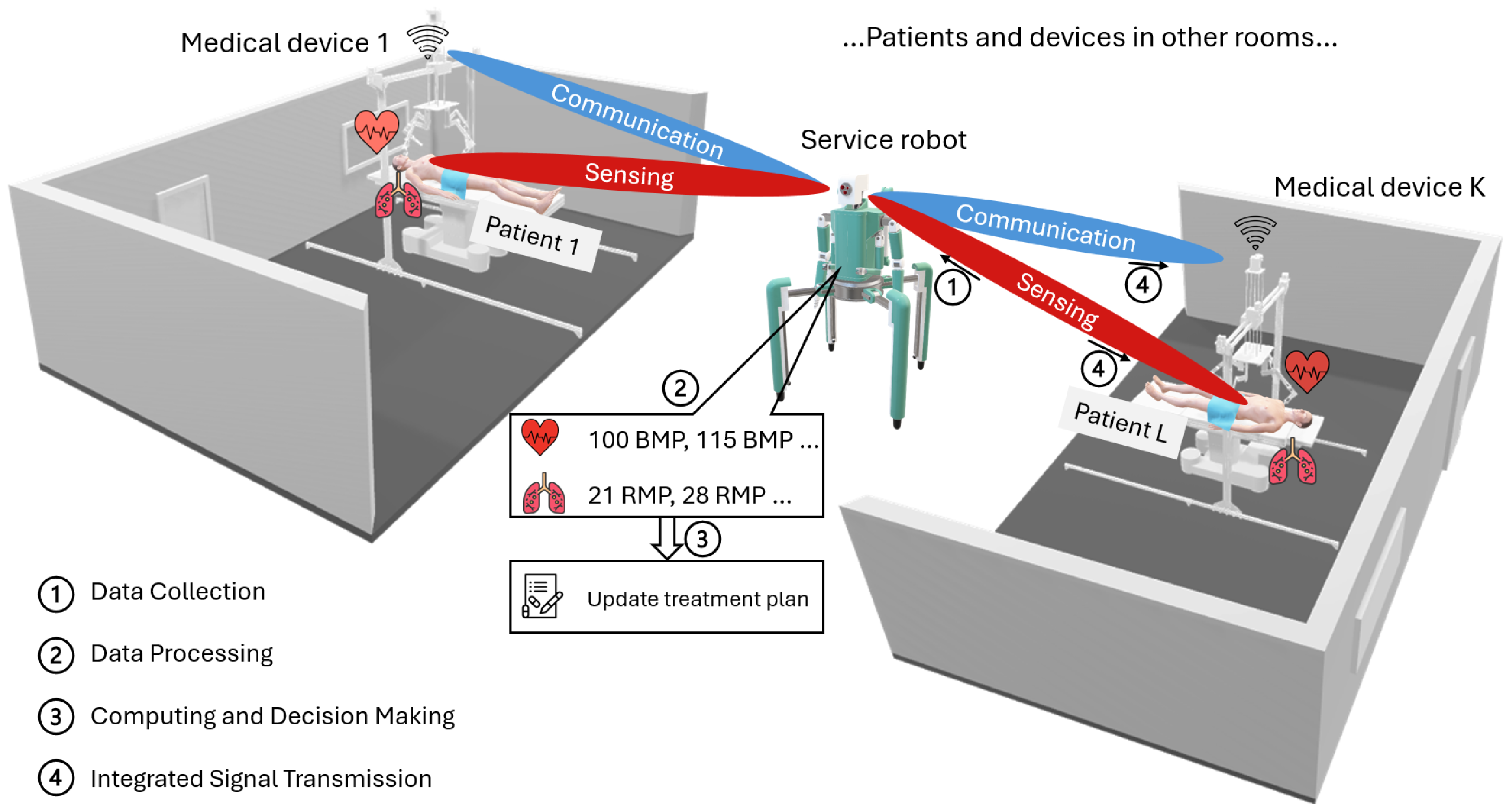}
    \caption{An ISCSC-enabled eHealth system, where a service robot engages in semantic communication with multiple medical devices and performs sensing of multiple patients to determine their locations and vital signs.}
    \label{ISCSC ehealth}
\end{figure*}

\begin{figure*}[!t]
    \centering
    \includegraphics[width=\linewidth]{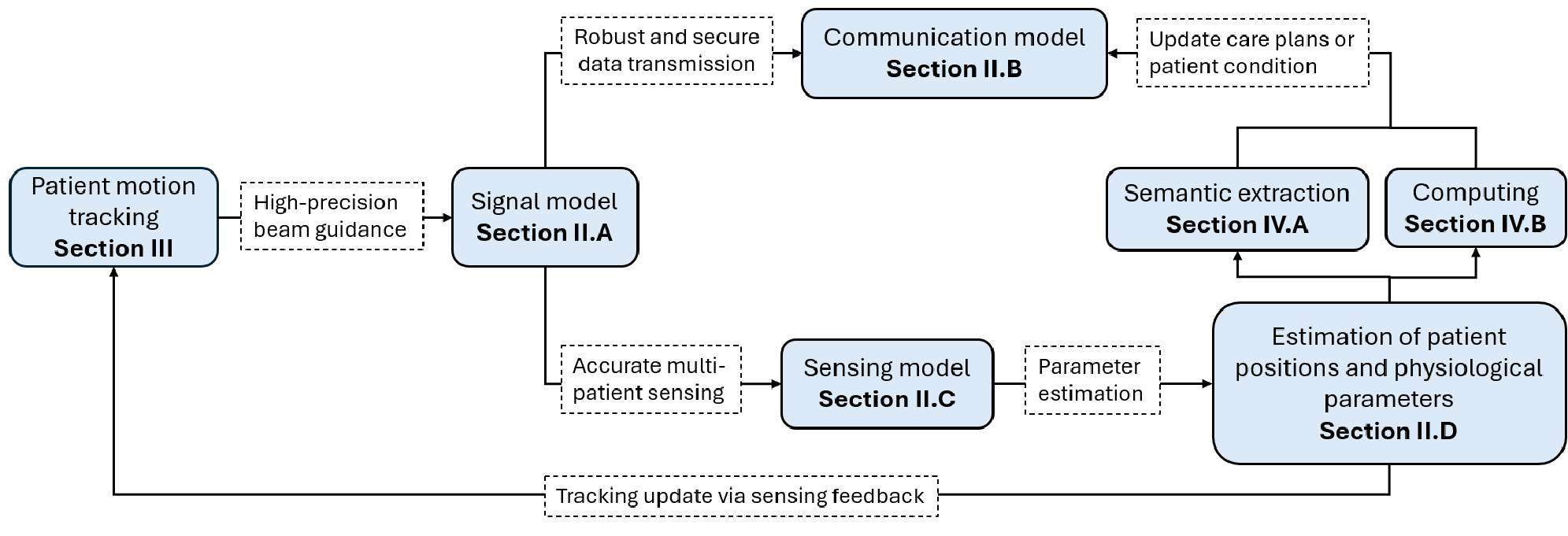}
    \caption{Flowchart of the proposed ISCSC eHealth system, illustrating the key components.}
    \label{procedure}
\end{figure*}

As illustrated in Fig.~\ref{ISCSC ehealth}, we consider the design of the ISCSC system within a hospital environment, comprising a service robot, multiple patients, and various medical devices. The service robot is equipped with a uniform linear array (ULA) consisting of \( N \) antennas and serves as a dual-functional node for both downlink communication and patient sensing. Let $\mathcal{K} = \{1, \ldots, K \}$ and $\mathcal{L} = \{1, \ldots, L \}$ denote the sets of medical devices and patients, respectively. Specifically, the robot establishes wireless communication links with \( K \) medical devices, where each device \( k \in \mathcal{K} \) is equipped with a single antenna. Concurrently, the robot performs vital sign detection for \( L \) patients, with each patient \( l \in \mathcal{L} \) modelled as a point target. During the concurrent operation of communication and sensing functionalities, ensuring the confidentiality of transmitted information is critical. Specifically, communication signals directed toward medical devices, often containing sensitive patient data, must be secured against unauthorised access, including from the individuals being sensed. The key procedures of the ISCSC eHealth system are summarised as follows:
\begin{enumerate}
    \item Data Collection: The service robot acquires sensing data from the patients using sensors like mmWave radar.
    \item Data Processing: The acquired data are processed to extract patient locations and vital physiological parameters.
    \item Computing and Decision Making: Semantic information is derived based on the processed sensing data to support treatments.
    \item Integrated Signal Transmission: A unified signal comprising both sensing and semantic information is transmitted for continuous monitoring and further communication.
\end{enumerate}

Fig.~\ref{procedure} illustrates the key modules and their benefits within the proposed system. Each component is discussed in detail in the subsequent sections.

\subsection{Signal Model}

In the considered ISCSC system, the robot simultaneously transmits both communication and sensing signals, with beamformers assisting in accurately directing the signals toward the desired direction (i.e., the chest of a patient). Precise sensing beamforming is particularly crucial for accurately analysing heart rate and respiratory rate from the sensing signal, ensuring that each patient's vital sign is properly captured. This highlights the essential role of beamformers in the system. As patients are in motion, the robot dynamically predicts their positions in each time slot and steers the beams accordingly to maintain precise tracking and optimal signal reception. Consequently, the transmitted signal from the robot can be expressed as \cite{jiang2025ris}:
\begin{equation}\label{signal model}
\mathbf{x} \left(t\right) = \mathbf{W} \mathbf{c} \left(t\right) + \mathbf{R} \mathbf{z}\left(t\right), 
\end{equation}
where $\mathbf{W} \in \mathbb{C}^{N \times K}$ and $\mathbf{R} \in \mathbb{C}^{N \times L}$ are the communication and sensing beamforming matrices, respectively. Note that the design of $\mathbf{R}$ is based on the predicted states (e.g., angle) of patients. The semantic message is denoted by $\mathbf{c} \in \mathbb{C}^{K\times1}$, and $\mathbf{z} \in \mathbb{C}^{L\times1}$ is the sensing signal. 
The semantic message $c_k \in \mathbf{c}$ is generated from the conventional communication message $m_k$, i.e., $c_k = f(m_k)$. Here, $f(\cdot)$ is a function, such as Transformers, that maps a conventional communication message to a semantic message. As an example, the original message $m_k$ ``Patient IDxx01's heart rate is 80 BPM and respiration rate is 15 RMP; Patient IDxx02's heart rate is 140 BPM and respiration rate is 30 RMP'' can be abstracted into a high-level semantic message $c_k$ ``Patient IDxx01 is stable; Patient IDxx02 needs attention,'' with higher priority assigned to the critical status of Patient IDxx02. The detailed physiological data are stored in KBs (such as the EHRs) and are accessible to authorised medical devices.

Without loss of generality, the following three assumptions are made to simplify the system design  \cite{liuxiang2020joint}: 
\begin{itemize}
    \item There is no correlation between the confidential message and the radar signal, i.e., $\mathbb{E}(\mathbf{c} \mathbf{z}^H) = \mathbf{0}_{K \times L}$.
    \item There is no correlation between the confidential messages for different medical devices, i.e., $\mathbb{E}(\mathbf{c}\mathbf{c}^H) = \mathbf{I}_{K}$.
    \item There is no correlation between the sensing signals for different patients, i.e., $\mathbb{E}(\mathbf{z}\mathbf{z}^H) = \mathbf{I}_{L}$.
\end{itemize}

As a result, the covariance matrix of the transmit waveform is given by  
\begin{equation}\label{eq9}
    \mathbf{R}_x = \mathbb{E}[\mathbf{x}\mathbf{x}^H] = \sum_{k=1}^K \mathbf{W}_k + \sum_{l=1}^L \mathbf{R}_l,
\end{equation}  
where \(\mathbf{W}_k = \mathbf{w}_k \mathbf{w}_k^H\) and \(\mathbf{R}_l = \mathbf{r}_l \mathbf{r}_l^H\). Here, \(\mathbf{w}_k \in \mathbb{C}^{N \times 1}\) and \(\mathbf{r}_l \in \mathbb{C}^{N \times 1}\) represent the transmit beamforming vectors for the \(k\)-th medical device and the \(l\)-th patient, respectively.

\subsection{Communication Model}

Once the joint signal is transmitted by the service robot, the received signal at the $k$-th medical device can be expressed as \cite{shikh2007joint}:
\begin{equation}\label{eq1}
    y_k(t) = \mathbf{h}_k^H \mathbf{x}(t) + n_k,
\end{equation}
where $\mathbf{h}_k\in\mathbb{C}^{N\times1}$ is the channel vector for the $k$-th medical device, $\mathbf{x}(t)\in\mathbb{C}^{N\times1}$ is the transmitted signal, and $n_k \sim \mathcal{CN}(0,\sigma^2_c)$ is the communication noise for the $k$-th medical device. On the patient side, the received signal can be formulated by
\begin{equation}\label{eq2}
    y_l(t) = \alpha_l \mathbf{a}^H \left(\theta_l\right) \mathbf{x}(t) + n_l = \mathbf{h}_l^H \mathbf{x}(t) + n_l,
\end{equation}
where $\alpha_l$ is the path-loss coefficient for the $l$-th patient, $n_l \sim \mathcal{CN} \left(0,\sigma^2_r \right)$ is the sensing noise for the $l$-th patient. The steering vector is denoted by $\mathbf{a}\left(\theta_l\right) \in \mathbb{C}^{N\times1}$ with $\theta_l$ being the angle of the chest of $l$-th patient.

With \eqref{eq1}, we can derive the signal-to-interference-plus-noise ratio (SINR) for the $k$-th medical device. It is expressed as:
\begin{equation}\label{eq12}
\gamma_k = \frac{\left|\mathbf{h}_k^H \mathbf{w}_k\right|^2}{\left|\mathbf{h}_k^H \sum_{k'=1, k' \neq k}^K \mathbf{w}_{k'}\right|^2 + \left|\mathbf{h}_k^H \sum_{l = 1}^L \mathbf{r}_{l}\right|^2 + \sigma^2_c}.
\end{equation}

Similarly, with \eqref{eq2}, the SINR for the $l$-th patient is given by
\begin{equation}\label{eq15}
    \Gamma_{l | k}  = \frac{\left|\mathbf{h}_l^H \mathbf{w}_k\right|^2}{\left|\mathbf{h}_l^H \sum_{k'=1, k' \neq k}^K \mathbf{w}_{k'}\right|^2 + \left|\mathbf{h}_l^H \sum_{l = 1}^L \mathbf{r}_{l}\right|^2 + \sigma^2_r}.
\end{equation}

\subsection{Sensing Model}

The echo signal received by the service robot of the \( l \)-th patient is given by:  
\begin{equation}\label{eq4}
    \mathbf{\hat{y}}_l(t) = \beta_l e^{j \phi_l(t - \tau)} \mathbf{a}(\theta_l) \mathbf{a}^H(\theta_l) \mathbf{x}(t - \tau) + \mathbf{n}_l,
\end{equation}  
where $\beta_l$ is the round-trip path loss coefficient and $\mathbf{n}_l \in \mathbb{C}^{N\times1}$ is noise vector with zero mean and variance of $\sigma^2_e \mathbf{I}$. In addition, \( \phi_l(t-\tau) \) captures the effect of chest motion together with the propagation delay $\tau$. It can be formulated as:
\begin{equation}\label{phase change}
    \phi_l(t -\tau) = \frac{4 \pi}{\lambda} \left(m_l\left( t - \tau \right) \right),
\end{equation}  
with \( \lambda \) denoting the signal wavelength. The time delay \( \tau \) is determined by the distance between the service robot and the patient, such as $\tau = \frac{2 d}{c}$, with \( d \) being the patient-robot distance, and $c$ representing the speed of light. Additionally, $m_l(t)$ represents the chest motion of patient $l$.

For each patient, the chest motion $m_l(t)$ is primarily attributed to respiration and heartbeat, and can therefore be modelled as \cite{wang2020vimo, liu2023mmrh, liu2024diversity}:
\begin{equation}\label{motion func}
    m_{l}(t) = m_{0,l} + m_{r,l}(t) + m_{h,l}(t),
\end{equation}
where $m_{0,l}$ denotes body motion considered as interference (e.g., rolling in bed, moving in a wheelchair, or walking), $m_{r,l}(t)$ represents displacement due to respiration, and $m_{h,l}(t)$ represents displacement due to heartbeat. The respiration and heartbeat-induced displacements can be modelled by the sinusoidal functions:
\begin{equation}\label{motion func2} m_{r,l}(t) = A_{r,l} \sin(2\pi f_{r,l} t), \quad m_{h,l}(t) = A_{h,l} \sin(2\pi f_{h,l} t), \end{equation}
where \( A_{r,l} \) and \( A_{h, l} \) denote the amplitudes of respiration and heartbeat, respectively, measured in millimetres, and \( f_{r,l} \) and \( f_{h,l} \) represent their corresponding frequencies, measured in Hertz.
  
\begin{remark}
    Equation \eqref{motion func} can be generalised to represent a variety of biological signals, including those related to sleep monitoring and blood pressure.
\end{remark}

\subsection{Estimation of Patient Position and Physiological Parameters}

From the received echo, the service robot estimates the angle, heart rate, and respiration rate of each patient, i.e., the parameters embedded in \eqref{eq4}.
The first step is to estimate the time delay $\tilde{\tau}$, which provides the distance information between the robot and the patient. This is achieved using a matched filter that correlates the received signal with a time-shifted and conjugated version of the transmitted signal. According to \cite{liu2020joint}, the time delay is estimated as:
\begin{equation}\label{matched filter}
    \Tilde{\tau} = \arg \max_{\tau} \left|\int_0^{\Delta T} \mathbf{\hat{y}}_l\left(t\right) \mathbf{x}^*\left(t-\tau\right)  dt \right|^2,
\end{equation}
where \( \mathbf{x}^*\left(t-\tau\right) \) represents the conjugate of the transmitted signal with a time shift \( \tau \). The integration occurs over the observation interval \( [0, \Delta T] \). With the estimated time delay, we calculate the instantaneous phase from the delay-compensated echo signal as follows:
\begin{equation}
    \hat{\phi}_l (t) = \frac{1}{N}  \sum \tan^{-1} \left( \frac{\Im \{\mathbf{\hat{y}}_l(t) \}} {\Re \{\mathbf{\hat{y}}_l(t) \}}\right), 
\end{equation}
where the calculation averages the phase across all channels, improving robustness against noise and ensuring accurate extraction of respiration and heartbeat components.

With the estimated phase shift $\hat{\phi}(t)$, both respiration and heartbeat components are inherently embedded, resulting in a composite signal. To effectively separate these physiological signatures, we employ the variational mode decomposition (VMD) method, which decomposes $\hat{\phi}_l (t)$ into a finite set of intrinsic mode functions (IMFs) \cite{li2024dynamic}. This can be expressed as $\hat{\phi}_l(t) = \sum_{u=1}^{U} \hat{\phi}_{l,u}(t)$, where $U$ denotes the total number of IMFs. Given the distinct spectral characteristics of respiration and heartbeat signals, they are primarily concentrated in different IMFs. Accordingly, we denote the respiration-related and the heartbeat-related components as $\hat{\phi}_{\mathrm{RR}} = \sum_{i \in \mathcal{U}} \hat{\phi}_i(t) $ and $\hat{\phi}_{\mathrm{HR}} = \sum_{j \in \mathcal{U}} \hat{\phi}_j(t)$, respectively, with \(\mathcal{U} = \{1, \dots, U\}\). The summation reflects the fact that physiological signals may be distributed across multiple IMFs. Once the relevant IMFs are identified, we proceed to estimate the respiration and heartbeat rates using a frequency-domain approach based on the Fourier transform (FT). Although more advanced algorithms exist, such as in \cite{shuzan2021novel}, the FT method provides a straightforward and effective way to analyse frequency components. The process involves four key steps: bandpass filtering, Fourier transform, peak detection, and amplitude estimation. We describe each step below.
\begin{enumerate}
    \item To isolate the respiration and heartbeat components from the estimated phase, we apply bandpass filters tailored to their respective frequency ranges. The filters are denoted by BPF($\cdot$).
    \begin{itemize}
        \item A bandpass filter with a frequency range of 0.1–0.5 Hz is applied to isolate the low-frequency component associated with human respiration. This range corresponds to typical breathing rates of approximately 6–30 respirations per minute (RPM).
        \begin{equation}\label{rpm est}
            \hat{m}_{r, l}(t) = \text{BPF} \left( \hat{\phi}_{\mathrm{RR}}(t), \; 0.1, \; 0.5 \right).
        \end{equation}
        
        \item Subsequently, another bandpass filter with a frequency range of 1–2.5 Hz is applied to extract the high-frequency component associated with cardiac activity. This range corresponds to typical human heart rates of approximately 60–150 beats per minute (BPM).
        \begin{equation}\label{bpm est}
            \hat{m}_{h, l}(t) = \text{BPF} \left( \hat{\phi}_{\mathrm{HR}}(t), \;1, \; 2.5 \right).
        \end{equation}
    \end{itemize}

    \item After filtering, we perform an FT on the respiration and heartbeat components to convert the signals from the time domain to the frequency domain. We use $\mathcal{F}[\cdot]$ to denote the FT. The resulting respiration and heartbeat spectra are given by, respectively:
    \begin{equation}
        M_{r, l}(\omega) = \mathcal{F}\left[ \hat{m}_{r, l}(t) \right], \;  M_{h,l}(\omega) = \mathcal{F}\left[\hat{m}_{h, l}(t) \right].
    \end{equation}

    \item We then identify the dominant peaks in the respiration and heartbeat spectra. The respiration rate and heartbeat rate are determined from these peaks using:
    \begin{equation}\label{freq domain}
        \hat{f}_{r,l} = \argmax_{\omega \in (0.1,0.5)} |M_{r,l} (\omega)|, \; \hat{f}_{h,l} = \argmax_{\omega \in (1,2.5)} |M_{h,l} (\omega)|.
    \end{equation}

    Thus, the heartbeat and respiration rates can be estimated as $60 \times \hat{f}_{h,l}$ BPM and $60 \times \hat{f}_{r,l}$ RPM, respectively.

    \item The amplitudes of the respiration and heartbeat components are estimated by identifying the maximum values of their magnitude spectra within the respective frequency bands. The respiration and heartbeat amplitudes are given by:
    \begin{equation}
        \hat{A}_{r,l} = \max_{\omega \in (0.1,0.5)} |M_{r,l}(\omega)|, \; \hat{A}_{h,l} = \max_{\omega \in (1,2.5)} |M_{h,l}(\omega)|.
    \end{equation}

\end{enumerate}

Finally, we can proceed with angle estimation. By applying high-resolution angle estimation techniques, such as multiple signal classification (MUSIC) \cite{zhang2024joint} or estimation of signal parameters via rotational invariance (ESPRIT) \cite{xiang2023esprit}, the angle of arrival (AoA) can be estimated. By leveraging these signal processing techniques, the service robot can accurately determine the patient's location and monitor vital signs in a contactless manner.

\section{Patient Motion Tracking via IMM Filter}

The accuracy of the vital sign estimation as detailed in Section.~II.D is fundamentally dependent on the quality of the received echo signal. To ensure a strong echo signal, the sensing beams must be precisely and continuously aimed at the patient's chest. However, since patients are dynamic, a static beam is insufficient. Therefore, a critical component of the proposed framework is the ability to track and predict patient motion in real-time. This section presents the IMM filter used to accomplish this, which enables the predictive beamforming essential for robust and reliable sensing.

\subsection{State Evolution Model}
To accurately track patient positions, their kinematic behaviour must be properly modelled. In this paper, we assume that during each time slot, a patient can move in one of four possible directions: left, right, up, down, with constant velocity, or remain stationary. As an example, these movements correspond to rolling on the bed (left/right), transitioning to a seated position (up), lying down from a seated position (down), or remaining still (stationary). Alternatively, these movements can be a patient on a wheelchair moving in one of the four directions, or remaining steady. For analytical tractability, we consider horizontal and vertical movements to be parallel to the antenna array, neglecting any minor rotational or out-of-plane components.

For each patient $l$ at time slot $t$, the state model for left(+) or right(-) movement is formulated as follows \cite{liu2020radar}:
\begin{equation}
    \begin{aligned}
        & \theta_{l,t} = \theta_{l,t-1} \pm d^{-1}_{l,t-1} v_{l,t-1} \Delta T \sin\theta_{l,t-1} + u_\theta,\\
        & d_{l,t} = d_{l,t-1} \mp v_{l,t-1} \Delta T \cos\theta_{l,t-1} + u_d,\\
        & v_{l,t} = v_{l,t-1} + u_v,\\
        & \beta_{l,t} = \beta_{l,t-1} \left(1 + d^{-1}_{l,t-1} v_{l,t-1} \Delta T \cos\theta_{l,t-1} \right) + u_\beta, 
    \end{aligned}
\end{equation}
where $\mathbf{q}_{l,t} = [\theta_{l,t}, d_{l,t}, v_{l,t}, \beta_{l,t}]$ represents the state vector (i.e., angle, distance, velocity, and path loss) of patient $l$ at time slot $t$, with $t = 1, \ldots, T$. The duration of each time slot is denoted by $\Delta T$. The term  $\mathbf{u}_t = [u_\theta, u_d, u_v, u_\beta]$ represents the state process noise.

Following a similar derivation procedure, we derive the state model for up (+) and down(-) movement:
\begin{equation}
    \begin{aligned}
        & \theta_{l,t} = \theta_{l,t-1} \pm d^{-1}_{l,t-1} v_{l,t-1} \Delta T \sin \left(\frac{\pi}{2} + \theta_{l,t-1} \right)+ u_\theta,\\
        & d_{l,t} = d_{l,t-1} \mp v_{l,t-1} \Delta T \cos \left(\frac{\pi}{2} + \theta_{l,t-1}\right) + u_d,\\
        & v_{l,t} = v_{l,t-1} + u_v,\\
        & \beta_{l,t} = \beta_{l,t-1} \left(1 + d^{-1}_{l,t-1} v_{l,t-1} \Delta T \cos\left(\frac{\pi}{2} + \theta_{l,t-1} \right) \right) + u_\beta.
    \end{aligned}
\end{equation}

Additionally, if the patient has no movement, we have the following model:
\begin{equation}
    \begin{aligned}
        & \theta_{l,t} = \theta_{l,t-1} + u_\theta,\quad d_{l,t} = d_{l,t-1} + u_d,\\
        & v_{l,t} = v_{l,t-1} + u_v,\quad \beta_{l,t} = \beta_{l,t-1} + u_\beta.
    \end{aligned}
\end{equation}

\begin{remark}
    We assume the patient is confined to a limited area of movement, such as the dimensions of a bed or a designated region within a room. Once the patient reaches the boundary of this region, they must transition to an alternative state model that accounts for restricted mobility and prevents further physical displacement.
\end{remark}

By denoting the measured parameters as $\mathbf{r}_{l,t} = [\hat{\mathbf{z}}_{l,t},\hat{d}_{l,t}, \hat{v}_{l,t}]$ and the measurement noise as $\mathbf{z}_t = [\mathbf{z}_\theta, z_\tau, z_\mu]$, we can summarise the state model and the measurement model as follows:
\begin{equation}\label{eq55}
    \begin{cases}
      \text{State model:} & \mathbf{q}_{l,t} = \mathbf{g}_1\left(\mathbf{q}_{l,t-1}\right) + \mathbf{u}_t,\\
      \text{Measurement model:} & \mathbf{r}_{l,t} = \mathbf{g}_2\left(\mathbf{q}_{l,t}\right) + \mathbf{z}_t,
    \end{cases}
\end{equation}
where $\mathbf{g}_1\left(\cdot\right)$ is the state transition function, and $\mathbf{g}_2\left(\cdot\right)$ is the measurement function. As $\mathbf{u}_t$ and $\mathbf{z}_t$ are noise vectors with zero-mean Gaussian distribution, their covariance matrices can be formulated by
\begin{equation}\label{eq56}
       \mathbf{Q}_1 = \diag \left(\sigma^2_\theta, \sigma^2_d, \sigma^2_v, \sigma_\beta^2 \right),\; \mathbf{Q}_2 = \diag \left(\sigma^2_1 \mathbf{1}, \sigma^2_{2}, \sigma^2_{3} \right),
\end{equation}
where the formulas for calculating $\sigma^2_{1}$, $\sigma^2_{2}$ and $\sigma^2_{3}$ are given in \cite[Eq. (24)]{liu2020radar}.

\subsection{IMM Filtering for Patient Motion Tracking}

While the extended Kalman filter (EKF) is effective in tracking and predicting a patient's state when their motion follows a single, predefined kinematic model, it lacks adaptability when the patient's movement involves transitions between multiple kinematic models \cite{meng2023vehicular,yuan2025predictive}. This limitation arises because the EKF operates under the assumption of a single, continuous motion model, making it suboptimal for scenarios where patients exhibit different movement patterns over time, such as rolling over in bed, sitting up, or adjusting their posture.

To address this limitation, the IMM filter is employed. The IMM filter enhances tracking accuracy by dynamically adapting to model transitions, ensuring robust state estimation even when the patient switches between different motion modes \cite{kirubarajan2003kalman,xu2024isac}. The IMM filter consists of three key steps: model interaction, elementary filtering, and model probability updating. The EKF is utilised in the second step to estimate the state of each motion model, while the first and third steps facilitate the interaction between different models to achieve a more accurate overall estimation. While the detailed steps can be found in \cite{jo2010integration, nadarajah2012imm, farrell2008interacting}, the key steps are outlined below.

\subsubsection{Model Interaction}
To effectively handle uncertainty in patient motion, the state estimates and the covariance matrices from different motion models must be optimally combined. This combination is achieved through mixing weights, which determine how much effect that each model has on the overall state estimate. The mixing weights are derived from two key probabilities: the model probability \(p^{(j)}_{t-1}\), which represents the likelihood that model \(j\) was correct in the previous time step, and the transition probability \(\pi_ {j,i}\), which defines the probability of switching from model \(j\) to model \(i\). The mixing weights \(p_{t-1}^{(i|j)}\) are computed as:
\begin{equation}
    p^{(i|j)}_{t-1} = \frac{\pi_{j,i} p^{(j)}_{t-1}}{\sum_{j=1}^{M} \pi_{j,i} p^{(j)}_{t-1}},
\end{equation}
where \(M\) represents the total number of motion models.

Using these mixing weights, the predicted state and covariance matrix for model \(i\) are given by:
\footnotesize
\begin{equation}
    \Bar{\mathbf{M}}_{t-1}^{(i)} = \sum_{j=1}^M p^{(i|j)}_{t-1} \left[\hat{\mathbf{M}}_{t-1}^{(j)} + \left(\Bar{\mathbf{q}}_{t-1}^{(i)} - \hat{\mathbf{q}}_{t-1}^{(j)} \right) \left(\Bar{\mathbf{q}}_{t-1}^{(i)} - \hat{\mathbf{q}}_{t-1}^{(j)} \right)^T \right],
\end{equation}
\normalsize
where 
\begin{equation}
    \Bar{\mathbf{q}}_{t-1}^{(i)} = \sum_{j=1}^M \hat{\mathbf{q}}_{t-1}^{(j)} p^{(i|j)}_{t-1}.
\end{equation}

\subsubsection{Elementary Filtering}
Once the mixed state estimates are determined, each model independently implements EKF to recursively estimate the patient's state. The EKF involves six key steps \cite{kay1993fundamentals}:
\begin{enumerate}[i).]
    \item State Prediction:
    \begin{equation}\label{EKF1}
        \hat{\mathbf{q}}_{t|t-1}^{(i)} = g_1^{(i)} \left( \hat{\mathbf{q}}_{t-1}^{(i)} \right),
    \end{equation}
    
    \item Linearisation:
    \begin{equation}\label{EKF2}
        \mathbf{G}_{1, t-1}^{(i)} = \frac{\partial g_1^{(i)}}{\partial \mathbf{q}} \bigg|_{\mathbf{q} = \hat{\mathbf{q}}_{t-1}^{(i)}}, \;
        \mathbf{G}_{2, t}^{(i)} = \frac{\partial g_2^{(i)}}{\partial \mathbf{q}} \bigg|_{\mathbf{q} = \hat{\mathbf{q}}_{t|t-1}^{(i)}},
    \end{equation}
    
    \item Covariance Matrix Prediction:
    \begin{equation}\label{EKF3}
        \mathbf{M}^{(i)}_{t|t-1} = \mathbf{G}_{1, t-1}^{(i)}  \Bar{\mathbf{M}}_{t-1}^{(i)}  \mathbf{G}_{1, t-1}^{(i) H} + \mathbf{Q}_1,
    \end{equation}
    
    \item Kalman Gain Calculation:
    \begin{equation}\label{EKF4}
        \mathbf{S}_{t}^{(i)} = \mathbf{G}_{2, t}^{(i)} \mathbf{M}_{t|t-1}^{(i)} \mathbf{G}_{2, t}^{(i), H} + \mathbf{Q}_2,
    \end{equation}
    \begin{equation}
        \mathbf{K}_{t}^{(i)} = \mathbf{M}_{t|t-1}^{(i)} \mathbf{G}_{2, t}^{(i), H} \left(\mathbf{S}_{t}^{(i)} \right)^{-1},
    \end{equation}
    
    \item State Update:
    \begin{equation}\label{EKF5}
        \bar{\mathbf{r}}_{t}^{(i)} = \mathbf{r}_{t}^{(i)} - g_{2}^{(i)} \left( \hat{\mathbf{q}}_{t|t-1}^{(i)} \right), \;\hat{\mathbf{q}}^{(i)}_{t} = \hat{\mathbf{q}}_{t|t-1}^{(i)} + \mathbf{K}_{t}^{(i)} \bar{\mathbf{r}}_{t}^{(i)},
    \end{equation}
    
    \item Covariance Matrix Update:
    \begin{equation}\label{EKF6}
        \hat{\mathbf{M}}_{t}^{(i)} = \left( \mathbf{I} - \mathbf{K}_{t}^{(i)} \mathbf{G}_{2, t}^{(i)}\right) \mathbf{M}^{(i)}_{t|t-1},
    \end{equation}
\end{enumerate}

\subsubsection{Model Probability Update}
The final step of the IMM filter is updating the probability of each model given the most recent measurement. This is computed using the likelihood function:
\begin{equation}
    \Lambda_{t}^{(i)} = \frac{\exp \left( -\frac{1}{2}  \bar{\mathbf{r}}_{t}^{(i) T } \left(\mathbf{S}_{t}^{(i)} \right)^{-1} \bar{\mathbf{r}}_{t}^{(i)} \right)}{ \sqrt{\left |2\pi \mathbf{S}_{t}^{(i)} \right|} },
\end{equation}
\begin{equation}
    p_{t}^{(i)} = \frac{\Lambda_{t}^{(i)}  \sum_{j=1}^M \pi_{j,i} p_{t-1}^{(j)}}{ \sum_{i=1}^M \left( \Lambda_{t}^{(i)}  \sum_{j=1}^M \pi_{j,i} p_{t-1}^{(j)} \right) }.
\end{equation}

As such, the predicted state is given by:
\begin{equation}
    \hat{\mathbf{q}}_t = \sum_{i=1}^M  p_{t}^{(i)} \hat{\mathbf{q}}^{(i)}_t,
\end{equation}
and therefore the MSE matrix can be obtained via
\begin{equation}
    \hat{\mathbf{M}}_{t} = \sum_{i=1}^M p^{(i)}_{t} \left[\hat{\mathbf{M}}_{t}^{(i)} + \left(\hat{\mathbf{q}}_{t}^{(i)} - \hat{\mathbf{q}}_{t}\right) \left(\hat{\mathbf{q}}_{t}^{(i)} - \hat{\mathbf{q}}_{t} \right) ^T \right].
\end{equation}

\section{Performance Indicators}

This section outlines the key performance indicators used to evaluate the system's performance.

\subsection{Semantic Communication}

The semantic transmission rate is defined as the quantity of bits received by the medical device following the extraction of semantic information, and thus, the formulation is expressed as \cite{zhao2025compression}:
\begin{equation}\label{eq11}
    S_k = \frac{\iota}{\rho_k} \log_2 \left(1+\gamma_k \right),
\end{equation}
where the parameter $\rho_k=\frac{\text{len}(c_k)}{\text{len}(m_k)}, 0\leq \rho_k \leq 1$ represents the semantic extraction ratio, and $\text{len}(\cdot)$ denotes the length of the message. Here, $\iota$ is a scalar value converting the word-to-bit ratio. It is important to note that semantic communication in the proposed framework is optional. When $\rho_k = 1$, semantic communication is disabled, and the system defaults to conventional transmission. For example, in critical-care scenarios, this ensures that raw vital sign data (i.e., the original messages) are transmitted continuously without semantic compression, thereby guaranteeing maximum accuracy and reliability. In addition, the lower bound of $\rho_k$, denoted by $\rho_{k,LB}$, has been derived in \cite{yang2024secure, 11031224}:
\begin{equation}\label{eq14}
   \rho_{k,LB} = \frac{1}{1 - \ln Q + \sum_{g}^G w_{g,k} \log p_{g,k} },
\end{equation}
where $Q$ represents the global lower bound of all the individual Bilingual Evaluation Understudy (BLEU) scores. The BLEU score evaluates how closely the reconstructed message, obtained from the received semantic information, matches the original message. Additionally, $w_{g,k}$ denotes the weight assigned to the $g$-grams, where $G$ is the total number of $g$-grams required to represent a sentence. The precision score $p_{g,k}$ quantifies the accuracy of the message recovered by medical device $k$.

To prevent information leakage, we evaluate the worst-case semantic secrecy rate (SSR). A larger SSR means a lower chance of information leakage. In the worst-case scenario, where the patient has an extensive KB similar to that of the robot and the medical devices, i.e., $\rho_{l|k} = \rho_k$, the semantic transmission rate for the $l$-th patient related to the $k$-th medical device is determined as follows:
\begin{equation}\label{eq16}
    S_{l | k} = \frac{\iota}{\rho_k} \log_2 \left(1+\Gamma_{l | k} \right).
\end{equation}

In this way, the worst-case SSR of the $k$-th medical device is formulated by
\begin{equation}\label{eq17}
    SSR_{k} = \max \left( \min_{l \in L} \left[S_k - S_{l | k} \right ], \; 0 \right).
\end{equation}

\subsection{Computing and Power Consumption}

Semantic information extraction from conventional messages relies heavily on advanced machine learning models, such as Transformers, which are computationally intensive. Therefore, the power consumption of computing plays a pivotal role in semantic communication and must be explicitly accounted for in the overall transmission power budget. In \cite{yang2023energy, zhao2025compression}, the authors established and validated a relationship between the semantic extraction ratio $\rho_k$ and the corresponding computing power, showing a positive correlation within a specific range of $\rho_k$. Accordingly, the computing power function was modelled in a piecewise manner. However, identifying the appropriate range and slope of each segment necessitates extensive hyperparameter tuning. To avoid manual hyperparameter selection, we instead adopt a natural logarithmic function to model the computing power:
\begin{equation}\label{eq18}
    P_{\text{comp}} =  \sum_{k=1}^K -F\ln(\rho_k),
\end{equation}
where $F$ is an energy-efficiency coefficient that is dependent on the CPU design.To demonstrate the difference between our function and \cite[Eq. (10)]{zhao2025compression}, we assume $K = 1$ and re-state \cite[Eq. (10)]{zhao2025compression} as: 
\begin{equation*}
    P_{\text{comp}} = C_{1} \rho_k^{-C_2},
\end{equation*}
where $C_{1}$ and $C_{2}$ are positive constants. By setting $C_1, C_2, F$ to their corresponding values, we obtain Figs. \ref{fittings}.
\begin{figure}[!t]
    \centering
    \begin{subfigure}{0.24\textwidth}
        \centering
        \includegraphics[width=0.8\linewidth]{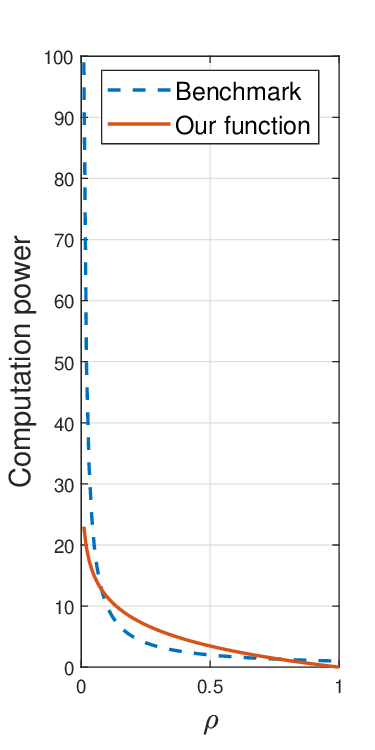}
        \caption{$[1, 1, 5]$}
        \label{fit1}
    \end{subfigure}
    \begin{subfigure}{0.24\textwidth}
        \centering
        \includegraphics[width=0.8\linewidth]{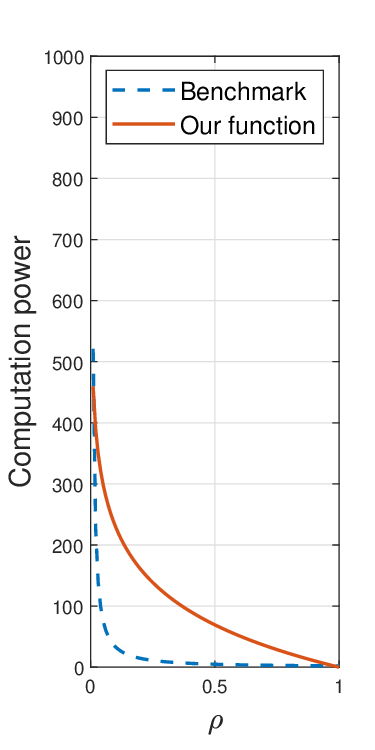}
        \caption{$[2.1, 1.2, 100]$}
        \label{fit2}
    \end{subfigure}
    \caption{An illustration of the approximation of \cite[Eq. (10)]{zhao2025compression} using \eqref{eq18}. The parameters are given as $[C_1, C_2, F]$.}
    \label{fittings}
\end{figure}

The preference for the proposed function is motivated by two key reasons:
\begin{enumerate}[i).]
    \item Eq.~\eqref{eq18} provides a smoother increase of power when $\rho$ decreases from a large value. Whereas \cite[Eq. (10)]{zhao2025compression} only experiences large increments of power when $\rho$ is relatively small.
    \item Eq.~\eqref{eq18} involves only a single hyperparameter, whereas \cite[Eq. (10)]{zhao2025compression} requires two. Moreover, when $\rho = 1$, indicating that no semantic extraction is performed, Eq. \eqref{eq18} yields zero computing power, while \cite[Eq. (10)]{zhao2025compression} produces a small positive value. Since no semantic processing occurs when $\rho = 1$, the computing power should ideally be zero.
\end{enumerate}

On the other hand, the communication and sensing power consumption at the transmitter side is given by \cite{wei2024cramer}:
\begin{equation}\label{eq19}
   P_{\text{c\&s}} =  \Tr \left(\sum_{k=1}^K \mathbf{W}_k + \sum_{l=1}^L \mathbf{R}_l\right).
\end{equation}

Additionally, the overall transmission power consumption is limited to the power budget:
\begin{equation}\label{eq20}
    P_{\text{comp}} + P_{\text{c\&s}} \leq P_t,
\end{equation}
with $P_t$ being the total power budget.

\subsection{Sensing}

To accurately track the motion of the patient and estimate vital sign information from the echo signal, it is essential to precisely direct the beams toward the patient’s chest, corresponding to the optimal measurement model in \eqref{eq55}. This necessitates assessing angle estimation accuracy by calculating and optimising the MSE between the detected and true angles, thereby ensuring that the beams are effectively aligned with the correct directions. However, deriving a closed-form expression for MSE can be challenging. As an alternative, we utilise the Cramér-Rao bound (CRB) to assess the performance of patient sensing in a static scenario. For unbiased estimators, the CRB provides a theoretical lower bound on the MSE, i.e., $\text{MSE}(\hat{\theta}) \geq \text{CRB}(\theta)$, offering an analytically tractable means of performance evaluation. Since the CRB formulation has been derived in \cite{1703855}, we provide a brief description below.

Defining the parameters to be estimated as $\xi_l = [\theta_l, \beta_l]$, the Fisher information matrix (FIM) about $\xi_l$ is given by:
\begin{equation}\label{eq21}
    \mathbf{J}_l = 
    \begin{bmatrix}
        {J}_{\theta_l \theta_l} & \mathbf{J}_{\theta_l \beta_l}\\
        \mathbf{J}_{\theta_l \beta_l}^T & \mathbf{J}_{\beta_l \beta_l}
    \end{bmatrix}.
\end{equation}

Let us define: $\mathbf{B}_l = \mathbf{a}(\theta_l) \mathbf{a}^H(\theta_l)$ and $\Dot{\mathbf{B}}_{\theta_l} = \frac{\partial \mathbf{B}_l}{\partial \theta_l}$. Hence, we can obtain the following equations:
\begin{equation}\label{eq22}
    {J}_{\theta_l \theta_l} = \frac{2T|\beta_l|^2}{\sigma^2_e} \Tr \left(\dot{\mathbf{B}}_{\theta_l} \mathbf{R}_x \dot{\mathbf{B}}_{\theta_l}^H \right), 
\end{equation}
\begin{equation}
    \mathbf{J}_{\beta_l \beta_l} = \frac{2T}{\sigma^2_e} \Tr \left(\mathbf{B}_l \mathbf{R}_x \mathbf{B}_l^H \right) \mathbf{I},
\end{equation}
\begin{equation}\label{eq23}
    \mathbf{J}_{\theta_l \beta_l} = \frac{2T\beta_l^*}{\sigma^2_e} \Re\left\{ \Tr \left(\mathbf{B}_l \mathbf{R}_x \dot{\mathbf{B}}^H_{\theta_l} \right)\right\}[1,\;j].
\end{equation}

Therefore, the CRB of $\theta_l$ is formulated by
\begin{equation}\label{eq25}
    CRB(\theta_l) = {\mathbf{J}_l^{-1}}_{[1,1]}=\left({J}_{\theta_l \theta_l} - \mathbf{J}_{\theta_l \beta_l} \mathbf{J}^{-1}_{\beta_l \beta_l} \mathbf{J}_{\theta_l \beta_l}^T\right)^{-1},
\end{equation}
where $\mathbf{A}_{[1,1]}$ means the element of matrix $\mathbf{A}$ in row 1 and column 1.

\begin{remark}
The advantage of semantic communication over traditional communication methods, while maintaining the sensing performance, will be demonstrated in Section~VI.A.
\end{remark} 

For tracking performance, we focus on the root mean square error (RMSE) of angle and distance, which are given by:
\begin{equation}\label{rmse track}
    \text{RMSE}_{\theta,l} = \sqrt{\left( \theta_{l} - \hat{\theta}_{l} \right)^2}, \;
    \text{RMSE}_{d,l} = \sqrt{\left( d_{l} - \hat{d}_{l} \right)^2},
\end{equation}

Finally, to evaluate the accuracy of vital sign estimation, we adopt the RMSE as a standard performance metric. Specifically, the RMSEs for respiration and heartbeat frequency estimates are defined as
\begin{equation}
\text{RMSE}_{r,l} = \sqrt{\left( f_{r,l} - \hat{f}_{r,l} \right)^2}, \;
\text{RMSE}_{h,l} = \sqrt{\left( f_{h,l} - \hat{f}_{h,l} \right)^2},
\end{equation}

Having defined these key performance indicators for semantic communication, power consumption, and sensing accuracy, we are now equipped to formulate an optimisation problem that balances these competing objectives.

\section{Problem Formulation and Algorithm Design}

This section formulates the optimisation problem and proposes the corresponding algorithms.

\subsection{Problem Formulation}

Since the medical devices are movable within a limited range (e.g., to facilitate better treatment), the communication channels are inherently uncertain. Consequently, the objective is to maximise the worst-case SSR across all medical devices to guarantee robustness. Additionally, to ensure precise beam alignment with the patients’ chest angles, we minimise the sum CRB of the patient angles, as a lower CRB value corresponds to reduced sensing error. This dual optimisation enhances semantic secrecy while improving the precision of patient angle estimates.

The optimisation problem is subject to several constraints. Firstly, the semantic extraction ratio $\rho_k$ is lower-bounded by $\rho_{k,LB}$ as shown in \eqref{eq14}. Furthermore, the constraint of the transmit power budget must be satisfied, as defined in \eqref{eq20}. Additionally, the positive semi-definite property of matrices must be satisfied. Moreover, to facilitate signal transmission through a single-stream transmit beamforming for each medical device, a rank-one constraint is imposed. It is worth noting that the inclusion of this constraint eliminates the need for more complex transceiver schemes, as discussed in \cite{wang2014outage}. Hence, the optimisation problem is formulated as:
\begin{subequations}\label{eq40}
    \begin{align}
        \max_{\mathbf{W}_k, \mathbf{R}_l, \rho_k}\; &  \kappa_1 \min_{k\in K}(SSR_k) - \kappa_2 \sum_{l=1}^L CRB(\theta_l)\label{eq40a}\\
        \text{s.t.} \quad & \rho_{k, LB} \leq \rho_k \leq 1, \forall k,\label{eqGOPb}\\
        & P_{\text{comp}} + P_{\text{c\&s}} \leq P_t, \label{eqGOPd}\\
        &\mathbf{W}_k \succeq 0, \mathbf{W}_k = \mathbf{W}_k^H,  \forall k,\label{eqGOPe}\\
        & \mathbf{R}_l \succeq 0, \mathbf{R}_l = \mathbf{R}_l^H, \forall l,\label{eqGOPf}\\
        & \text{rank}(\mathbf{W}_k) = 1,  \forall k,\label{eqGOPg}
    \end{align}
\end{subequations}
where $\kappa_1$ and $\kappa_2$ are the weights.

\subsection{Algorithm Design}

To simplify the objective function, we first transfer \eqref{eq40} into the following form:
\begin{subequations}\label{eq41}
\begin{align}
    \max_{\mathbf{W}_k, \mathbf{R}_l, \rho_k, \lambda}\; &  \kappa_1 \lambda - \kappa_2 \sum_{l=1}^L CRB(\theta_l)\label{eq41a}\\
    \text{s.t.} \quad & S_k - S_{l | k} \geq \lambda, \forall k, \forall l, \label{eq41b}\\
    & \eqref{eqGOPb} - \eqref{eqGOPg}.
\end{align}
\end{subequations}

Since both medical devices and patients move within a limited range, we assume that the channel uncertainty is confined within a bounded spherical region \cite{wang2014transmit,song2012robust,shenouda2007convex}:
\begin{equation}\label{eq42}
    \mathcal{H}_{i} := \left\{ \left(\hat{\mathbf{h}}_i + \mathbf{u}_i \right)^H \; \Big| \; ||\mathbf{u}_i|| \leq \varepsilon_i \right\}, \;i \in [k,l],
\end{equation}
where $\varepsilon_i \geq 0$ corresponds to the radius of $\mathcal{H}_i$. The estimated channel is denoted by $\hat{\mathbf{h}}_i$, while $\mathbf{u}_i$ represents the error vector limited by $\varepsilon_i$. Next, we deal with the non-concave constraint \eqref{eq41b}. The upper and lower bounds of $|\mathbf{h}_i \sum_{k=1}^K \mathbf{w}_k|^2$ are given by
\begin{equation}\label{eq43}
\begin{aligned}
    & \left|\mathbf{h}_i \sum_{k=1}^K \mathbf{w}_k\right|^2 \leq \hat{\mathbf{h}}_i \sum_{k=1}^K \mathbf{W}_k \hat{\mathbf{h}}_i^H + 2 \varepsilon_i \left|\left| \sum_{k=1}^K \mathbf{W}_k \hat{\mathbf{h}}_i^H\right|\right|,\\
    & \left|\mathbf{h}_i \sum_{k=1}^K \mathbf{w}_k \right|^2 \geq \hat{\mathbf{h}}_i \sum_{k=1}^K \mathbf{W}_k \hat{\mathbf{h}}_i^H - 2 \varepsilon_i \left|\left|\sum_{k=1}^K \mathbf{W}_k \hat{\mathbf{h}}_i^H\right|\right|.
\end{aligned}
\end{equation}

\begin{proof}
    By neglecting the term $\mathbf{u}_i \sum_{k=1}^K \mathbf{w}_k \mathbf{w}_k^H \mathbf{u}_i^H$, we derive the upper bound via
 \begin{equation*}
        \begin{aligned}
            \left|\mathbf{h}_i \sum_{k=1}^K \mathbf{w}_k\right|^2 & = \left(\hat{\mathbf{h}}_i + \mathbf{u}_i \right) \sum_{k=1}^K \mathbf{w}_k \mathbf{w}_k^H \left(\hat{\mathbf{h}}_i + \mathbf{u}_i \right)^H\\
            & \overset{(a)}{\approx} \hat{\mathbf{h}}_i \sum_{k=1}^K \mathbf{w}_k \mathbf{w}_k^H \hat{\mathbf{h}}_i^H + 2 \Re \left\{ \mathbf{u}_i \sum_{k=1}^K \mathbf{w}_k \mathbf{w}_k^H \hat{\mathbf{h}}_i^H \right\}\\
            &\overset{(b)}{\leq} \hat{\mathbf{h}}_i \sum_{k=1}^K\mathbf{W}_k \hat{\mathbf{h}}_i^H  + 2 \left|  \mathbf{u}_i \sum_{k=1}^K \mathbf{W}_k \hat{\mathbf{h}}_i^H \right|\\
            & \overset{(c)}{\leq} \hat{\mathbf{h}}_i \sum_{k=1}^K \mathbf{W}_k \hat{\mathbf{h}}_i^H  + 2 \varepsilon_i \left| \left| \sum_{k=1}^K \mathbf{W}_k \hat{\mathbf{h}}_i^H \right| \right|,
        \end{aligned}
    \end{equation*}
    where step (a) follows from expanding the quadratic form and discarding the error term $\mathbf{u}_i \mathbf{W}_k \mathbf{u}_i^H$. Step (b) uses the inequality $\Re\{z\} \leq |z|$ for any complex number $z$, and step (c) follows from the Cauchy–Schwarz inequality. The proof of the corresponding lower bound follows similarly by using \(\Re\{z\} \geq -|z|\).
\end{proof}

\begin{table*}
\centering
\begin{minipage}{1\textwidth}
    \begin{align} \label{eq44}
      & \frac{\iota}{\rho_k}\log_2 \left(\frac{\overbrace{\hat{\mathbf{h}}_k \sum_{k=1}^K \mathbf{W}_k \hat{\mathbf{h}}_k^H - 2 \varepsilon_k \left|\left|\sum^K_{k=1} \mathbf{W}_k \hat{\mathbf{h}}_k^H\right|\right| + \hat{\mathbf{h}}_k \sum_{l=1}^L \mathbf{R}_l \hat{\mathbf{h}}_k^H - 2 \varepsilon_k \left|\left|\sum^L_{l=1} \mathbf{R}_l \hat{\mathbf{h}}_k^H \right|\right| + \sigma_c^2 }^{\text{Term 1}}}{\underbrace{\hat{\mathbf{h}}_k \sum^K_{k' = 1, k' \neq k} \mathbf{W}_{k'} \hat{\mathbf{h}}_k^H + 2 \varepsilon_k \left|\left|\sum^K_{k'=1, k' \neq k} \mathbf{W}_{k'} \hat{\mathbf{h}}_k^H\right|\right| + \hat{\mathbf{h}}_k \sum^L_{l = 1} \mathbf{R}_{l} \hat{\mathbf{h}}_k^H + 2 \varepsilon_k \left|\left|\sum^L_{l=1} \mathbf{R}_{l} \hat{\mathbf{h}}_k^H\right|\right| + \sigma_c^2}_\text{Term 2} } \right) \nonumber\\
     &- \frac{\iota}{\rho_k}\log_2 \left(\frac{\overbrace{\hat{\mathbf{h}}_l \sum_{k=1}^K \mathbf{W}_k \hat{\mathbf{h}}_l^H + 2 \varepsilon_l \left|\left|\sum_{k=1}^K \mathbf{W}_k \hat{\mathbf{h}}_l^H\right|\right| + \hat{\mathbf{h}}_l \sum^L_{l=1} \mathbf{R}_l \hat{\mathbf{h}}_l^H + 2 \varepsilon_l \left|\left|\sum_{l=1}^L\mathbf{R}_l \hat{\mathbf{h}}_l^H\right|\right| + \sigma_r^2}^{\text{Term 3}} }{\underbrace{\hat{\mathbf{h}}_l \sum_{k'=1, k' \neq k}^K \mathbf{W}_{k'} \hat{\mathbf{h}}_l^H - 2 \varepsilon_l \left|\left| \sum_{k' = 1, k' \neq k}^K \mathbf{W}_{k'} \hat{\mathbf{h}}_l^H\right|\right|+ \hat{\mathbf{h}}_l \sum^L_{l=1} \mathbf{R}_l \hat{\mathbf{h}}_l^H - 2 \varepsilon_l \left|\left|\sum_{l=1}^L\mathbf{R}_l \hat{\mathbf{h}}_l^H\right|\right| + \sigma_r^2}_{\text{Term 4}} } \right) \geq \lambda.
    \end{align}
\medskip
\hrule
\end{minipage}
\end{table*}

In a similar way, the lower and upper bounds of $\left|\mathbf{h}_i \sum \mathbf{r}_l\right|^2$ can be obtained. As such, the lower bound of \eqref{eq41b} is found and shown in \eqref{eq44}. However, \eqref{eq44} remains non-convex, primarily due to the terms $\max -\log_2(\text{Term 2})$ and $\max -\log_2(\text{Term 3})$. Inspired by \cite{zhao2015robust}, to address this issue, we consider the following change of variables in \eqref{eq44}:
\begin{equation}
    \begin{aligned}
        & e^{a_k} \triangleq \text{Term 1}, \quad e^{b_k} \triangleq \text{Term 2},\\
        & e^{c_k} \triangleq \text{Term 3}, \quad e^{d_k} \triangleq \text{Term 4}.
    \end{aligned}
\end{equation}

By introducing exponential terms, the expression such as $\max -\log_2(\text{Term 2})$ is effectively transformed into $\max -b_k \log_2(e)$, which is a linear and convex function. Thus, \eqref{eq44} can be written in the form of:
\begin{equation} \label{eq45}
    \frac{\iota}{\rho_k}\log_2 \left(\frac{e^{a_k}}{e^{b_k}} \right) - \frac{\iota}{\rho_k}\log_2 \left(\frac{e^{c_{l | k}}}{e^{d_{l | k}}} \right) \geq \lambda.
\end{equation}

Finally, to tackle the non-concave function CRB in the objective function, we transform it into a constraint of
\begin{equation}\label{eq35}
    \begin{bmatrix}
        {J}_{\theta_l \theta_l}-{U}_{l} & \mathbf{J}_{\theta_l \beta_l}\\
        \mathbf{J}_{\theta_l \beta_l}^T & \mathbf{J}_{\beta_l \beta_l} 
    \end{bmatrix} \succeq 0, \forall l,\\
\end{equation}
with the CRB part in the objective function becomes $-\kappa_2 \left(\sum_{l=1}^L U_l^{-1}\right)$ and $U_l \geq 0$ is a new variable introduced.

Hence, \eqref{eq41} can be reformulated in the following way:
\begin{subequations}\label{eq46}
    \begin{align}
        \max_{\mathbf{\Psi}}\; &  \kappa_1 \lambda - \kappa_2 \left(\sum_{l=1}^L U_l^{-1} \right)\label{eq46a}\\
        \text{s.t.} \quad & \frac{\iota}{\rho_k}\log_2 e^{a_k - b_k + d_{l | k} - c_{l | k}} \geq \lambda, \forall k, \forall l,\label{eq46b}\\
        & \text{Term 1} \geq e^{a_k},\forall k, \label{eq46c}\\
        & \text{Term 2} \leq e^{b_k}, \forall k, \label{eq46d}\\
        & \text{Term 3} \leq e^{c_{l | k}}, \forall l, \forall k, \label{eq46e}\\
        & \text{Term 4} \geq e^{d_{l | k}}, \forall k, \forall l, \label{eq46f}\\
        & \eqref{eqGOPb},\eqref{eqGOPd},\eqref{eqGOPe}, \eqref{eqGOPf}, \eqref{eqGOPg}, \eqref{eq35}, \label{eq46j}
    \end{align}
\end{subequations}
where $\mathbf{\Psi} = [\mathbf{W}_k, \mathbf{R}_l, U_l, \rho_k, \lambda, a_k, b_k, c_{l | k}, d_{l | k}]$. In addition, it has been verified in \cite{zhao2015robust} that all the inequalities in \eqref{eq46c} to \eqref{eq46f} hold with equalities at the optimal points.

Nevertheless, constraints \eqref{eq46d} and \eqref{eq46e} are still non-convex. We consider using the first-order Taylor Expansion. Let us introduce two more equations:
\begin{equation}\label{eq47}
\begin{aligned}
    \hat{b}_k &= \ln \Biggl(\hat{\mathbf{h}}_k \sum^K_{k' = 1, k' \neq k} \mathbf{W}^i_{k'} \hat{\mathbf{h}}_k^H + 2 \varepsilon_k \left|\left|\sum^K_{k'=1, k' \neq k} \mathbf{W}^i_{k'} \hat{\mathbf{h}}_k^H \right|\right| \\
    & + \hat{\mathbf{h}}_k \sum^L_{l = 1} \mathbf{R}^i_{l} \hat{\mathbf{h}}_k^H + 2 \varepsilon_k \left|\left|\sum^L_{l=1} \mathbf{R}^i_{l} \hat{\mathbf{h}}_k^H\right|\right| + \sigma_c^2 \Biggl),
\end{aligned}
\end{equation}
\begin{equation}\label{eq48}
\begin{aligned}
    \hat{c}_{l | k} &= \ln \Biggl(\hat{\mathbf{h}}_l \sum_{k=1}^K \mathbf{W}^i_k \hat{\mathbf{h}}_l^H + 2 \varepsilon_l \left|\left| \sum_{k=1}^K \mathbf{W}^i_k \hat{\mathbf{h}}_l^H\right|\right|\\
    & + \hat{\mathbf{h}}_l \sum^L_{l=1} \mathbf{R}_l^i \hat{\mathbf{h}}_l^H + 2 \varepsilon_l \left|\left|\sum_{l=1}^L\mathbf{R}_l^i \hat{\mathbf{h}}_l^H\right|\right| + \sigma_r^2 \Biggl).
\end{aligned}
\end{equation}

By applying \eqref{eq47} and \eqref{eq48}, \eqref{eq46d} and \eqref{eq46e} can be replaced by the convex functions \eqref{50} and \eqref{51}, which are shown on the top of the next page.
\begin{table*}
\centering
\begin{minipage}{1\textwidth}
\begin{align}
    &  \hat{\mathbf{h}}_k \sum^K_{k' = 1, k' \neq k} \mathbf{W}_{k'} \hat{\mathbf{h}}_k^H + 2 \varepsilon_k \left|\left|\sum^K_{k'=1, k' \neq k} \mathbf{W}_{k'} \hat{\mathbf{h}}_k^H\right|\right| + \hat{\mathbf{h}}_k \sum^L_{l = 1} \mathbf{R}_{l} \hat{\mathbf{h}}_k^H + 2 \varepsilon_k \left|\left|\sum^L_{l=1} \mathbf{R}_{l} \hat{\mathbf{h}}_k^H\right|\right| + \sigma_c^2 \leq e^{\hat{b}_k} \left(b_k - \hat{b}_k + 1\right), \label{50}\\
    &   \hat{\mathbf{h}}_l \sum_{k=1}^K \mathbf{W}_k \hat{\mathbf{h}}_l^H + 2 \varepsilon_l \left|\left|\sum_{k=1}^K \mathbf{W}_k \hat{\mathbf{h}}_l^H\right|\right| + \hat{\mathbf{h}}_l \sum^L_{l=1} \mathbf{R}_l \hat{\mathbf{h}}_l^H + 2 \varepsilon_l \left|\left|\sum_{l=1}^L\mathbf{R}_l \hat{\mathbf{h}}_l^H\right|\right| + \sigma_r^2 \leq e^{\hat{c}_{l | k}} \left(c_{l | k} - \hat{c}_{l | k} + 1\right) \label{51}. 
\end{align}
\medskip
\hrule
\end{minipage}
\end{table*}
To solve the optimisation problem, we drop the rank-one constraint, then split the problem into three sub-problems and use the alternating optimising method:

\subsubsection{Sub-problem~1}
With given $\rho_k$ in problem \eqref{eq46}, the joint sensing and communication beamforming optimisation problem can be given by
\begin{subequations}\label{eq51}
\begin{align}
    \max_{\mathbf{\Psi}}\; & \kappa_1 \lambda - \kappa_2 \left(\sum_{l=1}^L U_l^{-1} \right)\label{eq51a}\\
    \text{s.t.} \quad & \eqref{eqGOPe}, \eqref{eqGOPf},\eqref{eq35}, \eqref{eq46b},\eqref{eq46c},\eqref{eq46f},\eqref{50},\eqref{51},
\end{align}
\end{subequations}
where $\mathbf{\Psi} = [\mathbf{W}_k, \mathbf{R}_l,U_l, \lambda, a_k, b_k, c_{l | k}, d_{l | k}]$. Optimisation problem \eqref{eq51} is concave and can be effectively solved via the standard convex optimisation tool, such as CVX \cite{grant2014cvx}.

\subsubsection{Sub-problem~2}
When semantic communication is selected (i.e., $\rho_k \neq 1$), given $\mathbf{\Psi} = [\mathbf{W}_k, \mathbf{R}_l, U_l, \lambda, a_k, b_k, c_{l | k}, d_{l | k}]$, as well as denoting $D_k = \log_2 e^{a_k - b_k + d_{l | k} - c_{l | k}}$, the semantic extraction ratio optimisation problem can be given by
\begin{subequations}\label{eq52}
\begin{align}
    \max_{\rho_k}\; & \frac{\iota}{\rho_k} D_k \label{eq52a}\\
    \text{s.t.} \quad & \eqref{eqGOPb}, \eqref{eqGOPd}, \eqref{eq46b},
\end{align}
\end{subequations}

\begin{proposition}
    The optimal value of $\rho_k$, denoted by $\rho_k^*$, is equal to $\min \left( \max \left( \frac{\iota D_k}{\eta^* F}, \rho_{LB, k} \right) , \rho_{UB, k} \right)$, where $\eta$ is the Lagrange multiplier and $\eta^*$ is the optimal value.
\end{proposition}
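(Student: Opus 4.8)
The plan is to solve \eqref{eq52} through Lagrangian duality, using the observation that, among the constraints, only the power budget \eqref{eqGOPd} couples the $K$ semantic extraction ratios. First I would tighten the feasible interval of each $\rho_k$: combining $\rho_{k,LB}\le\rho_k\le 1$ from \eqref{eqGOPb} with constraint \eqref{eq46b}, which rearranges to $\rho_k\le \iota D_k/\lambda$ (valid since $D_k>0$ and $\lambda>0$), gives $\rho_k\in[\rho_{LB,k},\rho_{UB,k}]$ with $\rho_{UB,k}=\min\big(1,\iota D_k/\lambda\big)$. I would then record two monotonicity facts that drive the argument: the objective $\iota D_k/\rho_k$ is strictly decreasing in $\rho_k$, while $P_{\text{comp}}=\sum_k -F\ln\rho_k$ is convex and strictly decreasing in each $\rho_k$; hence \eqref{eqGOPd} effectively imposes a joint lower bound on the ratios and, in the regime of interest, is active at the optimum. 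Under this assumption, the bound \eqref{eq46b} enters only through $\rho_{UB,k}$.

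Next I would dualize only the power budget, keeping each box constraint $\rho_k\in[\rho_{LB,k},\rho_{UB,k}]$ hard, and form
\[
\mathcal{L}\big(\{\rho_k\},\eta\big)=\sum_{k}\frac{\iota D_k}{\rho_k}-\eta\Big(P_{\text{comp}}+P_{\text{c\&s}}-P_t\Big),\qquad \eta\ge 0 .
\]
For fixed $\eta$ this separates over $k$, so I would maximize $h_k(\rho_k)=\iota D_k/\rho_k+\eta F\ln\rho_k$ on $[\rho_{LB,k},\rho_{UB,k}]$. Stationarity $h_k'(\rho_k)=-\iota D_k/\rho_k^2+\eta F/\rho_k=0$ gives the unique candidate $\tilde\rho_k=\iota D_k/(\eta F)$, and projecting it onto the interval yields $\rho_k(\eta)=\min\big(\max(\iota D_k/(\eta F),\rho_{LB,k}),\rho_{UB,k}\big)$, which is precisely the claimed form. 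It remains to fix $\eta^*$: since the budget is active, complementary slackness requires $P_{\text{comp}}=P_t-P_{\text{c\&s}}$ evaluated at $\rho_k(\eta^*)$; as $\eta\mapsto\rho_k(\eta)$ is continuous and nonincreasing, $\eta\mapsto\sum_k -F\ln\rho_k(\eta)$ is continuous and nondecreasing, so such an $\eta^*>0$ exists, is essentially unique, and can be located by bisection. Substituting $\eta^*$ into $\rho_k(\eta)$ completes the proof.

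The step I expect to be the main obstacle is reconciling the single shared multiplier $\eta$ with the $K$ per-device box constraints: one must check that $\rho_k(\eta^*)$, together with the induced multipliers for whichever of $\rho_{LB,k}$ or $\rho_{UB,k}$ is tight, satisfies all KKT conditions simultaneously, and that the bisection map is genuinely monotone on the relevant range of $\eta$ so that $\eta^*$ is well defined. A related subtlety I would treat carefully is arguing that the stationary point $\tilde\rho_k$, rather than an endpoint, is the actual maximizer of $h_k$ on the interval, and hence that the KKT point is globally optimal for \eqref{eq52} rather than merely stationary; this leans on the assumption that the power budget is tight (if it is slack, the solution collapses to $\rho_k^*=\rho_{LB,k}$ and must be stated separately) and on the sign of $D_k$.
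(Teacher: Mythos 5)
Your route is essentially the paper's: the paper likewise dualizes only the power budget with a single multiplier $\eta$, sets $\partial/\partial\rho_k=-\iota D_k/\rho_k^2+\eta F/\rho_k=0$ to get $\tilde\rho_k=\iota D_k/(\eta F)$, clips to $[\rho_{LB,k},\rho_{UB,k}]$, and locates $\eta^*$ by bisection. Your additions --- identifying $\rho_{UB,k}=\min\left(1,\iota D_k/\lambda\right)$ from \eqref{eq46b}, the monotonicity of $\eta\mapsto\rho_k(\eta)$ that makes the bisection well defined, and the explicit statement that a slack budget collapses the solution to $\rho_k^*=\rho_{LB,k}$ --- are all details the paper leaves implicit, and they are correct as far as they go.

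However, the ``subtlety'' you defer to your last paragraph is not a subtlety but the crux, and neither your argument nor the paper's closes it. For fixed $\eta>0$ the per-device term $h_k(\rho)=\iota D_k/\rho+\eta F\ln\rho$ satisfies $h_k''(\tilde\rho_k)=\eta F/\tilde\rho_k^2>0$ and tends to $+\infty$ at both ends of $(0,\infty)$, so $\tilde\rho_k$ is the \emph{minimizer} of $h_k$; the inner maximization of the Lagrangian over the box is attained at an endpoint, not at the stationary point. This reflects the underlying structure: \eqref{eq52} maximizes the convex function $\sum_k\iota D_k/\rho_k$ over a convex feasible set (a log-type budget intersected with a box), so its global optimum sits at an extreme point where at most one $\rho_k$ is strictly between its bounds. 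For $K=1$ this is harmless and the projected formula does recover the true optimum $\rho^*=\max\bigl(\rho_{LB},e^{-(P_t-P_{\text{c\&s}})/F}\bigr)$; but for $K\ge 2$ the projected stationary point generically has several interior coordinates and is \emph{not} the maximizer. Concretely, in the coordinates $x_k=-\ln\rho_k$ with $D_1=D_2$ and an active budget $x_1+x_2=B$, stationarity gives $x_1=x_2=B/2$ with objective value proportional to $2e^{B/2}$, whereas the vertex $(B,0)$ gives $1+e^{B}$, which is strictly larger. So stationarity plus complementary slackness certifies only a KKT point (necessary, not sufficient, here); upgrading it to global optimality of \eqref{eq52} --- or amending the claim --- requires an argument that neither you nor the paper supplies.
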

\begin{proof}
     See Appendix \ref{1Appendix}
\end{proof}

By applying \textbf{Proposition 1}, the optimal value of $\rho_k$ is found.

\subsubsection{Sub-problem~3}
As a last step, we use Gaussian randomisation to recover rank-one solutions.

The detailed steps for solving problem \eqref{eq46} are outlined in Algorithm \ref{alg:1}. The computational complexity of Algorithm \ref{alg:1} is $\mathcal{O}(I_1 I_2 (KN^2)^2) $, where $I_1$ represents the number of iterations in the outer loop, and $I_2$ denotes the number of iterations required to solve the problem \eqref{eq51}. The term $(KN^2)^2$ arises because the problem involves $K$ matrices of size $N \times N$, and the quadratic nature of the optimisation problem results in the squared term.

\begin{algorithm}
\caption{IMM Filter Procedure and Iterative Sensing, Communication, and Semantic Optimisation Algorithm}\label{alg:1}
\begin{algorithmic}[1]
\REPEAT
    \STATE Perform the model interaction step from the IMM Filter.
    \STATE Execute the state prediction, linearization, and MSE prediction steps from the elementary filtering process.
    \STATE Compute the predicted state estimate for predictive beamforming:
    $
        \hat{\mathbf{q}}_{t|t-1} = \sum_{i=1}^M \hat{\mathbf{q}}_{t|t-1}^{(i)} p_{t-1}^{(i)}
    $.
    \STATE Initialize variables: $\rho_k$, $\hat{c}_{l|k, i}$, and $\hat{b}_{k, i}$.

    \REPEAT
        \REPEAT
            \STATE Solve the optimization problem \eqref{eq51} to update $\hat{c}_{l|k, i+1}$ and $\hat{b}_{k, i+1}$.
        \UNTIL{$|\hat{c}_{l|k, i+1} - \hat{c}_{l|k, i}| \leq \varrho_1$ \textbf{and} $|\hat{b}_{k, i+1} - \hat{b}_{k, i}| \leq \varrho_2$.}

        \REPEAT
            \STATE Solve \eqref{eq52} using \textbf{\textit{Proposition 1}} and update $\rho_k$.
        \UNTIL

    \UNTIL{$\left|\mathbf{W}^{i+1} - \mathbf{W}^i\right| \leq \varrho_3$ \textbf{and} $\left|\mathbf{R}^{i+1} - \mathbf{R}^i\right| \leq \varrho_4$.}

    \STATE Apply Gaussian randomisation.
    \STATE Execute the Kalman gain calculation, state update, and MSE matrix update steps from the elementary filtering process.
    \STATE Perform the model probability update step.

\UNTIL

\end{algorithmic}
\end{algorithm}

\section{Numerical Results}

In this section, we present numerical results to evaluate the performance of the proposed framework. The robot is equipped with a ULA consisting of 8 antennas with half-wavelength spacing. The patients are positioned at $(-35^\circ, 2 \text{m})$, $(5^\circ, 5 \text{m})$, and $(25^\circ, 1 \text{m})$, while the medical devices are located at $(-30^\circ, 8 \text{m})$ and $(20^\circ, 4 \text{m})$. The noise power is fixed at $-60$ dBm, and the total transmit power is constrained to 20 dBm. Each channel $\mathbf{h}_i, \; i \in [k,l]$ consists of both line-of-sight (LoS) and NLoS components. In \eqref{eq11}, the parameter $\iota$ is set to 1.1, and the trade-off coefficient $\kappa$ is set to 0.5. The lower bounds of the semantic extraction ratios \(\rho_1\), \(\rho_2\), and \(\rho_3\) are randomly selected based on the dataset\footnotemark[1] \footnotetext[1]{Available at \url{https://www.kaggle.com/datasets/yangtony1999/medical-dataset-for-semantic-communication/data}.}, which has an average value of $0.084$ with standard deviation of $0.089$.
Furthermore, in \eqref{eq42}, the channel uncertainty bound is specified as $\varepsilon_i = 0.01$. The length of the time slot $\Delta T$ is set to 0.1 seconds. For the vital sign modelling, the respiration and heartbeat frequencies are generated in the range $0.1 - 0.5$ Hz and $1-2.5$ Hz, respectively, and amplitudes are generated around $A_r = 5$ mm and $A_h = 1$ mm, respectively. The patients may be moving at a speed of 0.1 m/s. The state transition noise covariance matrix is defined as $\mathbf{Q}_1 = \diag \left(10^{-1}, 1, 10^{-2}, 10^{-4} \right)$.

\subsection{Discussion of Remark 3}

\begin{figure}[!t]
    \centering
    \includegraphics[width=0.8\linewidth]{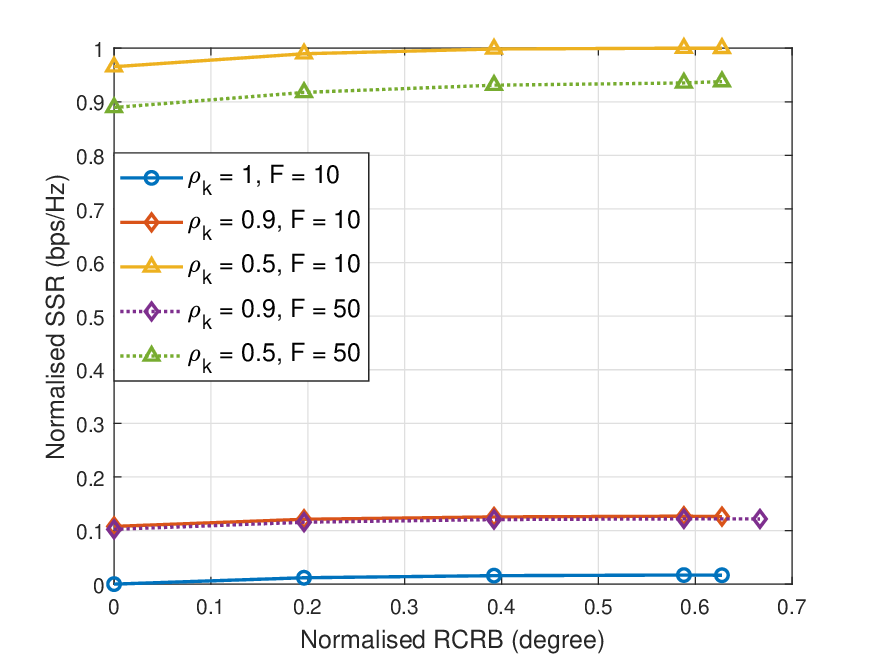}
    \caption{Normalised SSR versus normalised RCRB for different values of $\rho_k$ and $F$. }
    \label{Rho_SSR_CRB}
\end{figure}

For analytical traceability, this subsection considers a simplified scenario involving a single patient and a single medical device. To facilitate a performance comparison between traditional and semantic communication schemes, the parameter $\rho_k$ is set to $1$ to represent traditional communication, while values of $0.9$ and $0.5$ are employed to characterise semantic communication. The value of $F$ is set to $10$ and $50$, illustrating the system performance under varying device computing capabilities. Consequently, sub-problem \eqref{eq52} is omitted from this analysis.

Fig.~\ref{Rho_SSR_CRB} illustrates the normalised SSR and the normalised root CRB (RCRB) with varying values of $\rho_k$ and $F$. The plotted trends demonstrate that, for identical RCRB values, semantic communication yields improved performance over the traditional method, highlighting its ability to enhance communication efficiency without compromising sensing accuracy. Specifically, a lower $\rho_k$ yields a more improved communication performance. Furthermore, as $F$ increases from 10 to 50, a slight degradation in communication performance is observed. This is attributed to the increased power consumption associated with semantic feature extraction at higher $F$, which reduces the available power for communication and sensing, thereby slightly compromising the communication rate to maintain the sensing quality. These findings confirm the advantage of semantic communication in achieving better communication performance while maintaining sensing performance.

\subsection{Semantic Communication Performance}

In this subsection, we compare the performance of the proposed design with the methods presented in \cite{chen2025robust} and \cite{chen2025fast}, respectively. The approach in \cite{chen2025robust} employs the S-procedure for robust beamforming, while \cite{chen2025fast} utilises fractional programming (FP) for non-robust performance optimisation.

\begin{figure}[!t]
\centering
    \begin{subfigure}{0.45\textwidth}
        \centering
        \includegraphics[width=0.9\linewidth]{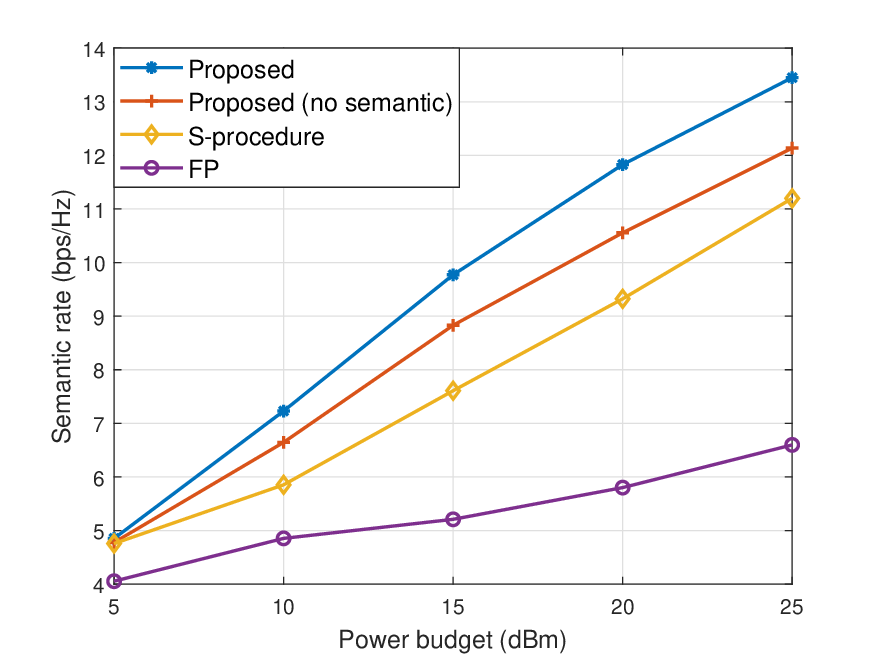}
        \caption{Semantic rate}
        \label{TR}
    \end{subfigure}
    \begin{subfigure}{0.45\textwidth}
        \centering
        \includegraphics[width=0.9\linewidth]{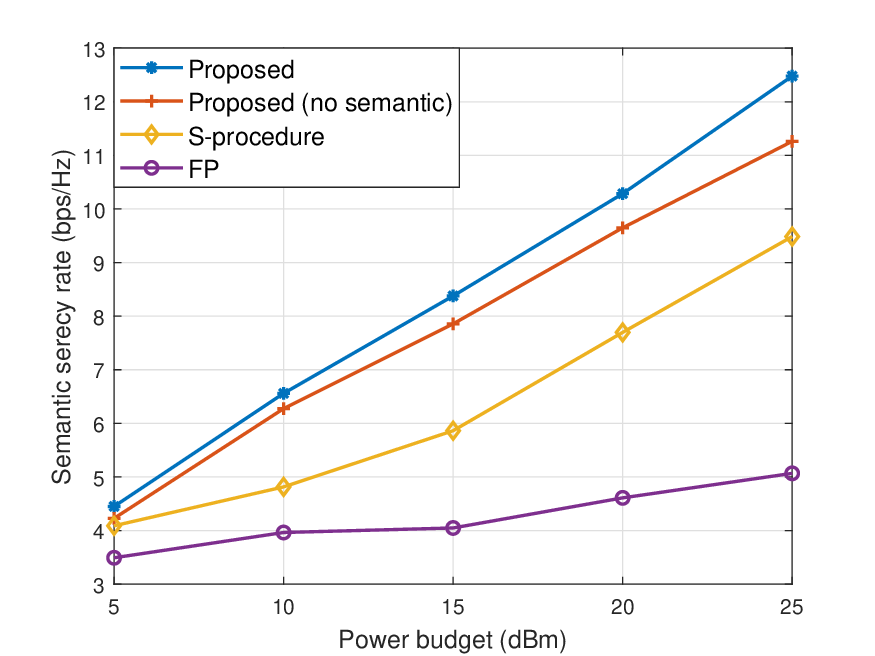}
        \caption{Semantic secrecy rate}
        \label{SSR}
    \end{subfigure}
\caption{Semantic rate and semantic secrecy rate against power budget.}
\label{comm}
\end{figure}

As shown in Fig.~\ref{comm}(a), when the power budget is limited to 5 dBm, the proposed design achieves identical communication performance as the S-procedure-based design. This is primarily because the limited power budget restricts the system's ability to effectively extract semantic features. The non-robust FP method yields the lowest transmission rate due to its sensitivity to channel uncertainties. As the power budget increases, the advantages of both semantic communication and the proposed algorithm become more evident. For instance, at 20~dBm, the proposed design without semantic achieves a transmission rate approximately 10\% higher than the S-procedure approach, and nearly double that of the FP method. When semantic communication is enabled, the performance gain becomes even more significant, with a 20\% improvement over the S-procedure design. Notably, the proposed design maintains consistently high performance as the power budget increases, whereas the FP method shows limited improvement under higher power levels. These findings confirm the advantage of semantic communication in achieving efficient information transmission while maintaining sensing performance.

Fig.~\ref{comm}(b) depicts the SSR as a function of the power budget. The proposed scheme without semantic communication consistently outperforms the S-procedure and FP-based designs, with the performance gap widening as the power budget increases. Incorporating semantic communication further enhances performance, owing to the effective integration of semantic techniques that improve both the semantic rate and the SSR. Notably, the advantage of the proposed design becomes more pronounced at higher power levels.

\begin{figure}[!t]
    \centering
    \includegraphics[width=0.8\linewidth]{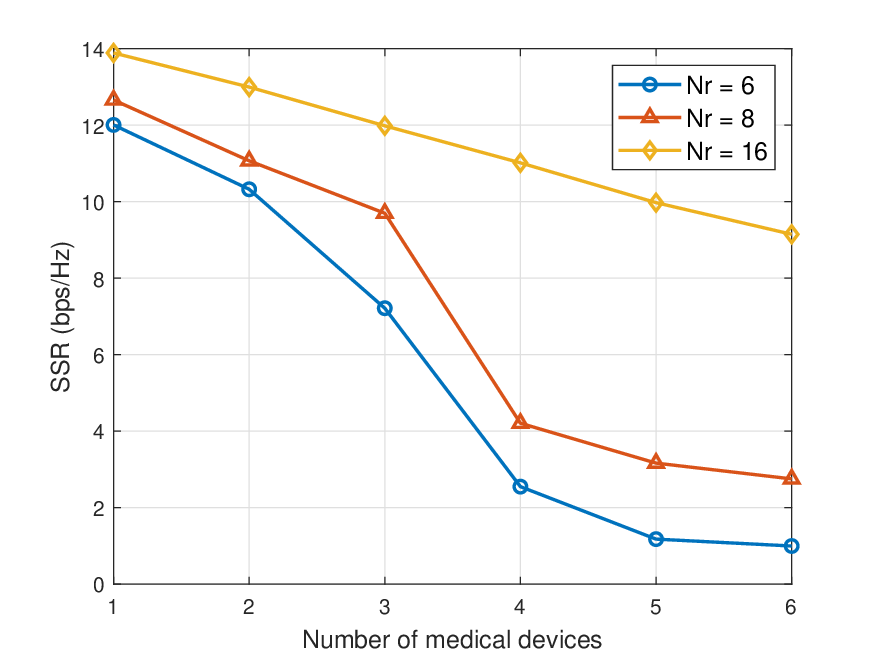}
    \caption{Semantic secrecy rate against the number of medical devices.}
    \label{SSR_k}
\end{figure}

Fig.~\ref{SSR_k} illustrates the average SSR under varying numbers of medical devices and different numbers of antennas. To ensure reliable communication and sensing performance, a common design constraint is that the total number of objects should remain significantly smaller than the number of antennas, i.e., $K + L \ll N$. With the number of patients fixed at $L = 3$, we vary the number of medical devices $K$. When $N = 6$, a sharp decline in SSR is observed from $K = 2$, indicating a violation of the $K + L \ll N$ condition. Increasing the number of antennas to $N = 8$ slightly improves performance, although a noticeable drop still occurs from $K = 3$. With $N = 16$, the SSR exhibits a gradual decline as $K$ increases. This trend arises because adding more devices reduces the per-device resource availability, thereby slightly degrading the SSR while still maintaining performance at high levels.

\begin{figure}[!t]
    \centering
    \includegraphics[width=0.8\linewidth]{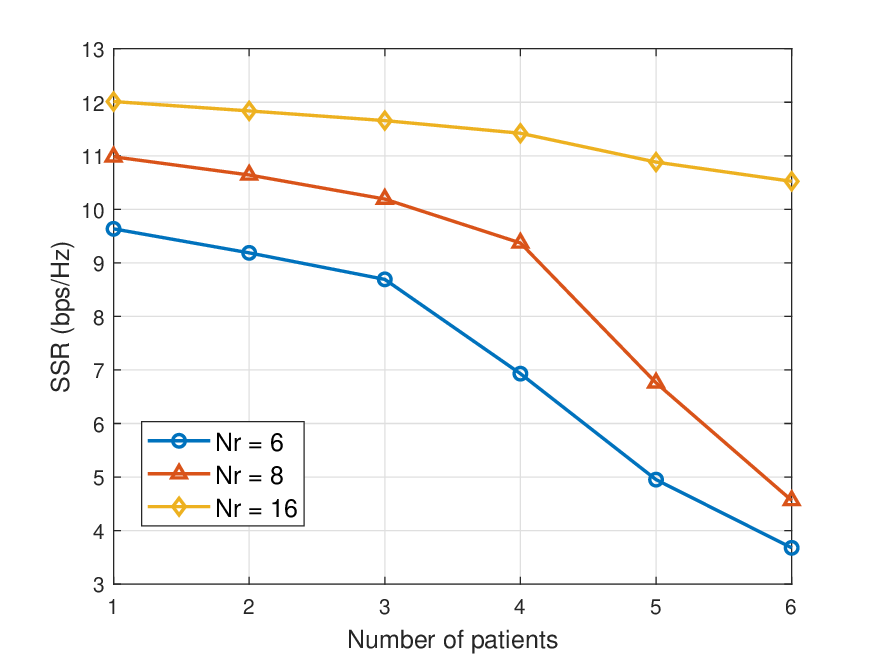}
    \caption{Semantic secrecy rate against the number of patients.}
    \label{SSR_L}
\end{figure}

Fig.~\ref{SSR_L} illustrates the average SSR for varying numbers of patients and antennas. As the number of patients increases, the SSR decreases due to the elevated risk of eavesdropping. When the condition $K + L \ll N$ is satisfied, as in the case of $N = 16$, the SSR remains relatively high. In contrast, when this condition is not met, as with $N = 6$, the SSR experiences a pronounced degradation. Consequently, Figs.~\ref{SSR_k} and \ref{SSR_L} indicate that equipping the robot with a sufficiently large number of antennas is essential to maintain communication and security performance; alternatively, a cooperative robot system could be deployed to achieve the same objective.

\subsection{Sensing and Tracking Performance}

\begin{figure}[!t]
    \centering
    \includegraphics[width=0.8\linewidth]{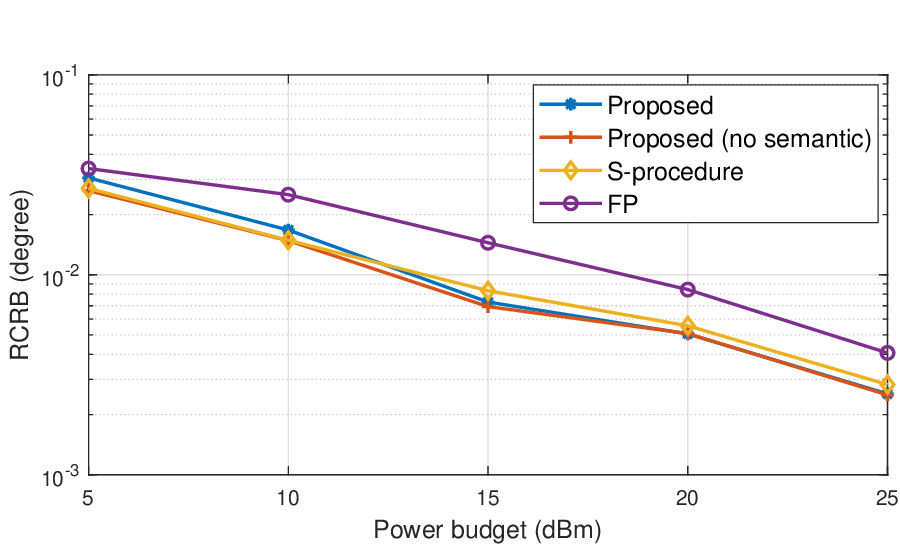}
    \caption{Sum RCRB against power budget.}
    \label{RCRB}
\end{figure}

\begin{figure}[!t]
    \centering
    \includegraphics[width=0.8\linewidth]{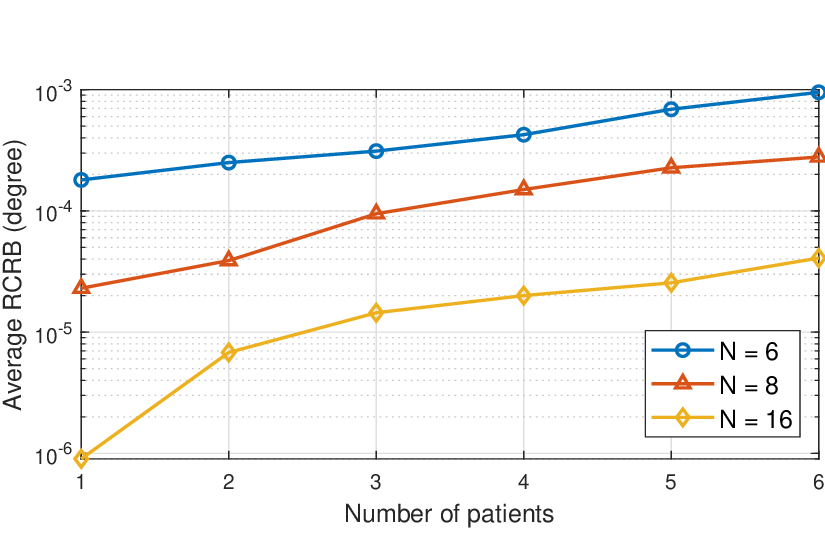}
    \caption{Average sum RCRB against number of patients.}
    \label{RCRB L}
\end{figure}

Regarding sensing performance, Fig.~\ref{RCRB} depicts the sum RCRB values. At lower power levels, the proposed design using semantic communication yields slightly higher RCRB values than the non-semantic scheme, indicating a minor degradation in sensing accuracy due to resource sharing with semantic extraction. However, as the power budget increases, this performance gap gradually diminishes. Notably, at 15 dBm, the proposed design with and without semantic communication achieves nearly equivalent sensing performance and outperforms the S-procedure method. This behaviour confirms that the proposed framework can sustain competitive sensing accuracy under moderate-to-high power budgets while simultaneously delivering enhanced communication performance, as shown in Fig.~\ref{comm}.

Fig.~\ref{RCRB L} illustrates the average sum RCRB values as the number of patients increases. As expected, the sensing accuracy generally deteriorates with an increasing number of patients, primarily due to the resources being divided among more targets. This resource constraint limits the achievable resolution and detection capability for each individual patient. To address this challenge and improve sensing performance, two main approaches can be considered. First, the robot can be equipped with a larger antenna array. For example, increasing the number of antennas from 6 to 16 reduces the RCRB from $10^{-3}$ to $10^{-6}$, thereby enhancing sensing accuracy. Alternatively, deploying a network of cooperating robots enables distributed sensing, where multiple robots jointly process sensing data to exploit spatial diversity and cooperative gains. This observation aligns with the conclusion in Figs.~\ref{SSR_k} and \ref{SSR_L}. Such a cooperative design can effectively mitigate resource constraints, resulting in more robust communication performance and improved sensing accuracy in multi-patient and multi-device scenarios.

Table~\ref{Table I} summarises the average tracking performance over time, quantified by RMSEs shown in \eqref{rmse track}, under varying noise levels, patient speeds, and movement probabilities. As an example, a movement probability of 0.15 indicates that each of the four possible movement directions occurs with probability 0.15, while the probability of no movement is 0.4. When the noise variance $\sigma_e^2 = 1$, corresponding to relatively high noise conditions, the RMSE values are on the order of $10^{-1}$. As the movement probability increases, $\text{RMSE}_{\theta, l}$ exhibits only a marginal rise, demonstrating the effective design of the IMM filter and the beamforming matrices. While $\text{RMSE}_{d,l}$ increases slightly more than $\text{RMSE}_{\theta,l}$ as patient speed increases, it remains within acceptable limits. Notably, increasing the patient’s speed does not lead to higher RMSEs, which highlights the robustness of the IMM filter. Under lower noise conditions, the RMSEs decrease to the order of $10^{-2}$. A similar conclusion can be found, that is, RMSEs remain stable with increasing speed but increase slightly as the movement probability rises. This slight increase in RMSEs with movement probability can be attributed to the challenge of accurately tracking more frequent movements, especially rapid ones. Nevertheless, the IMM filter maintains robust tracking performance under these dynamic conditions.

\begin{table*}[t]
    \centering
    \caption{Average tracking performance across time for varying noise, speed, and patient movement probabilities.}
    \label{Table I}
        \begin{tabular}{ |c|c|c|c|c|}
        \hline
        Noise $\sigma^2_e$ & Speed (m/s) & Probability of each movement & $\text{RMSE}_{\theta,l}$ (degree)  & $\text{RMSE}_{d,l}$ (m) \\
        \hline
        \hline
        & & 0.05 & 0.26843 & 0.18046 \\
        $10^0$ & 0.1 & 0.15 & 0.26856 & 0.18434 \\
        & & 0.25 & 0.26968 & 0.18737 \\
        \hline
        & & 0.05 & 0.26843 & 0.18689 \\
        & 1 & 0.15 & 0.26856 & 0.19295 \\
        & & 0.25 & 0.26969 & 0.19724 \\
        \hline
        & & 0.05 & 0.26843 & 0.22393 \\
        & 3 & 0.15 & 0.26859 & 0.23551 \\
        & & 0.25 & 0.26973 & 0.24438 \\
        \hline
        & & 0.05 & 0.011145 & 0.011112 \\
        $10^{-3}$ & 0.1 & 0.15 & 0.011151 & 0.011118 \\
        & & 0.25 & 0.011175 & 0.011122 \\
        \hline
        & & 0.05 & 0.011145 & 0.011118 \\
        & 1 & 0.15 & 0.011151 & 0.011126 \\
        & & 0.25 & 0.011175 & 0.011132 \\
        \hline
        & & 0.05 & 0.011145 & 0.011141 \\
        & 3 & 0.15 & 0.011151 & 0.011142 \\
        & & 0.25 & 0.011175 & 0.011149 \\
        \hline
        \end{tabular}
\end{table*}

\subsection{Vital Sign Estimation Performance}

\begin{figure*}[t!]
    \centering
    \begin{subfigure}{0.48\textwidth}
        \centering
        \includegraphics[width = \textwidth]{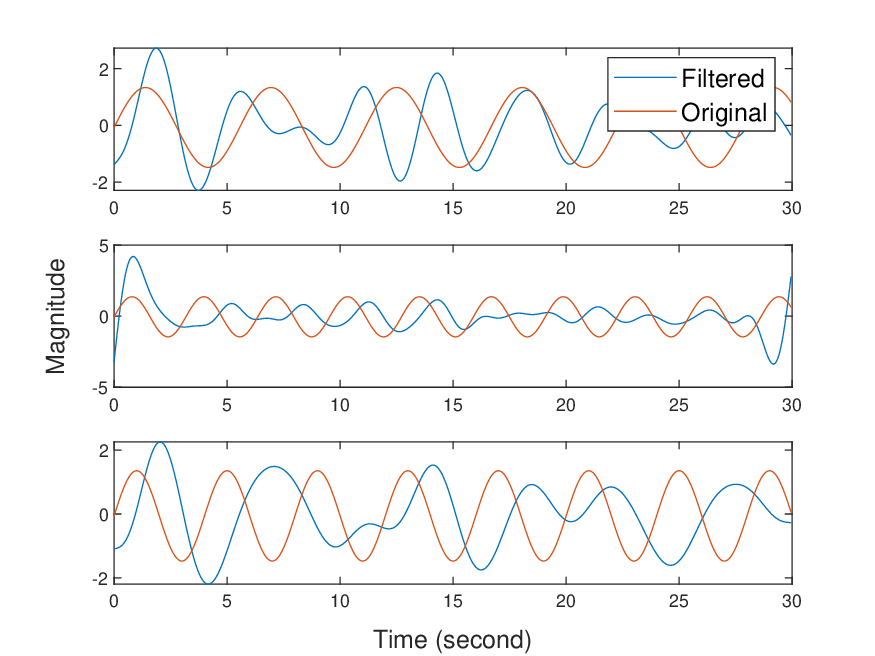}
        \caption{Respiration signal phase over time for patients 1–3.}
    \end{subfigure}
    \begin{subfigure}{0.48\textwidth}
        \centering
        \includegraphics[width = \textwidth]{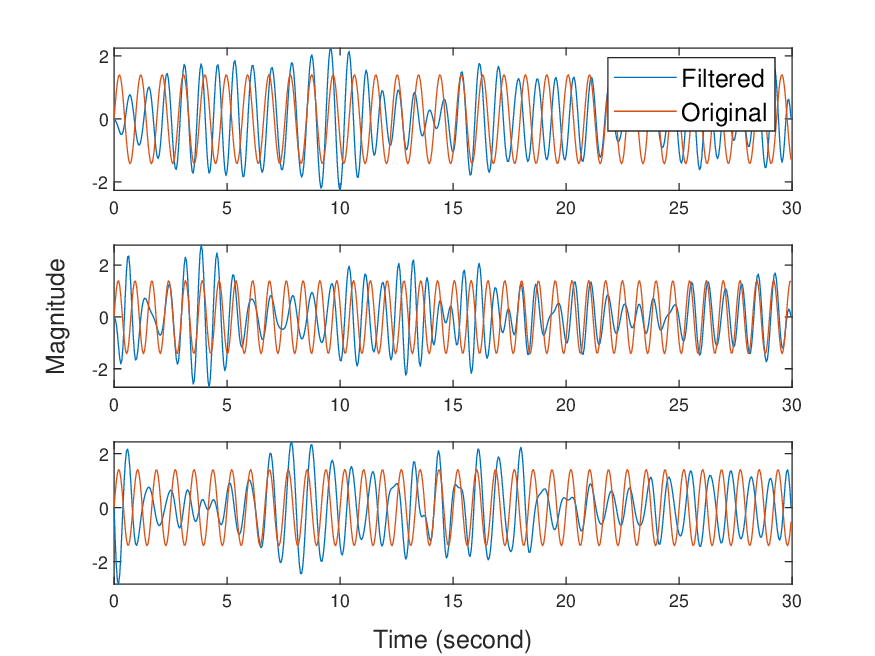}
        \caption{Heartbeat signal phase over time for patients 1–3.}
    \end{subfigure}
    \caption{Temporal signal phase variations.}
    \label{Temporal signal}
\end{figure*}

\begin{figure*}[!t]
    \centering
    \begin{subfigure}{0.48\textwidth}
        \centering
        \includegraphics[width = \textwidth]{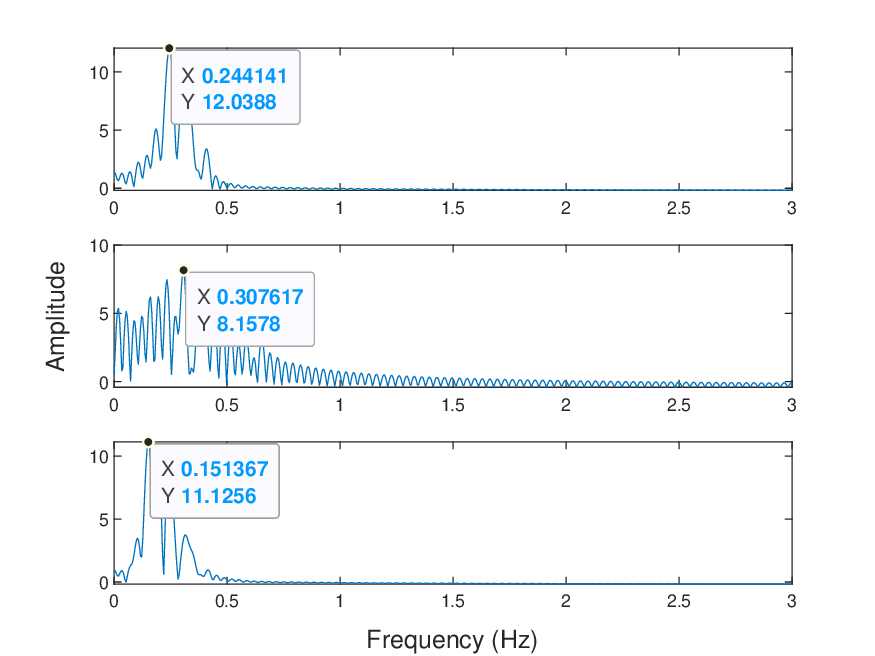}
        \caption{Frequency spectrum of respiration signals for patients 1–3.}
    \end{subfigure}
    \begin{subfigure}{0.48\textwidth}
        \centering
        \includegraphics[width = \textwidth]{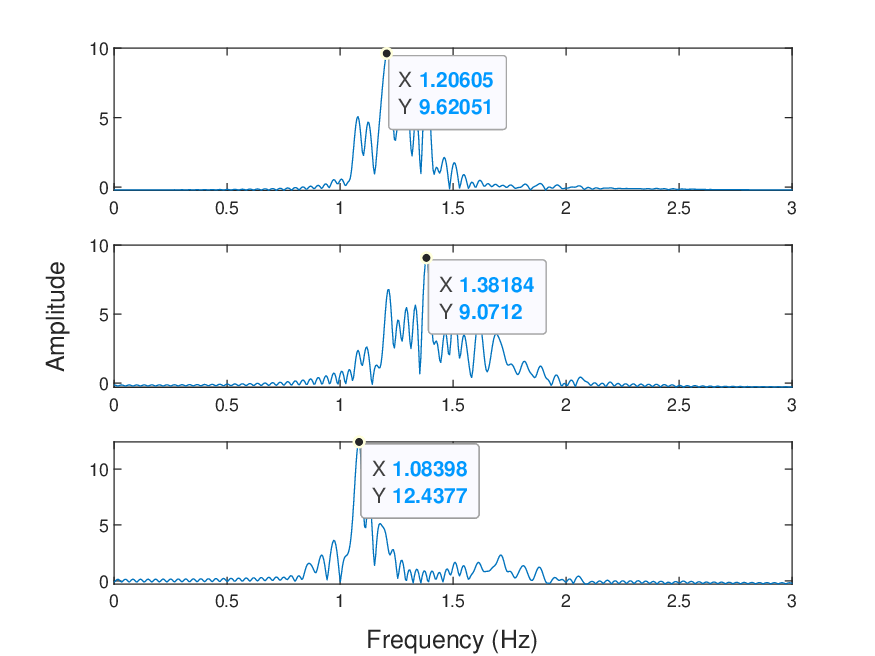}
        \caption{Frequency spectrum of heartbeat signals for patients 1–3.}
    \end{subfigure}
    \caption{Frequency-domain analysis.}
    \label{freq ana}
\end{figure*}

Fig.~\ref{Temporal signal} presents the experimental results of multi-patient vital sign sensing. Specifically, Fig.~\ref{Temporal signal}(a) illustrates the time-domain respiration signals, while Fig.~\ref{Temporal signal}(b) depicts the time-domain heartbeat signals. These signals are obtained after applying bandpass filtering as described in \eqref{rpm est} and \eqref{bpm est}, respectively. The corresponding ground truth values are provided in \eqref{motion func2}. As shown, the filtered signals retain the periodic characteristics of the original waveforms, thereby validating the effectiveness of the proposed design. The preservation of physiological periodicity demonstrates the capability of the system to extract vital sign information, even in a multi-subject sensing scenario.

Fig.~\ref{freq ana} illustrates the frequency-domain spectra (i.e., \eqref{freq domain}) of the extracted physiological signals for each of the three patients. Specifically, Fig.~\ref{freq ana}(a) presents the spectral representation of the respiration signals. The ground truth respiration frequencies are $f_{r,1} = 0.25$ Hz, $f_{r,2} = 0.3145$ Hz, and $f_{r,3} = 0.18$ Hz. As observed, the estimated spectral peaks closely match the true respiration frequencies across all patients, confirming the accuracy of the proposed method. Similarly, Fig.~\ref{freq ana}(b) shows the spectra of the heartbeat signals, with corresponding ground truth frequencies of $f_{h,1} = 1.2$ Hz, $f_{h,2} = 1.345$ Hz, and $f_{h,3} = 1.0578$ Hz. The observed spectral peaks align well with these values, further validating the reliability of the proposed design in capturing vital sign information.

\begin{figure}[!t]
    \centering
    \includegraphics[width=0.8\linewidth]{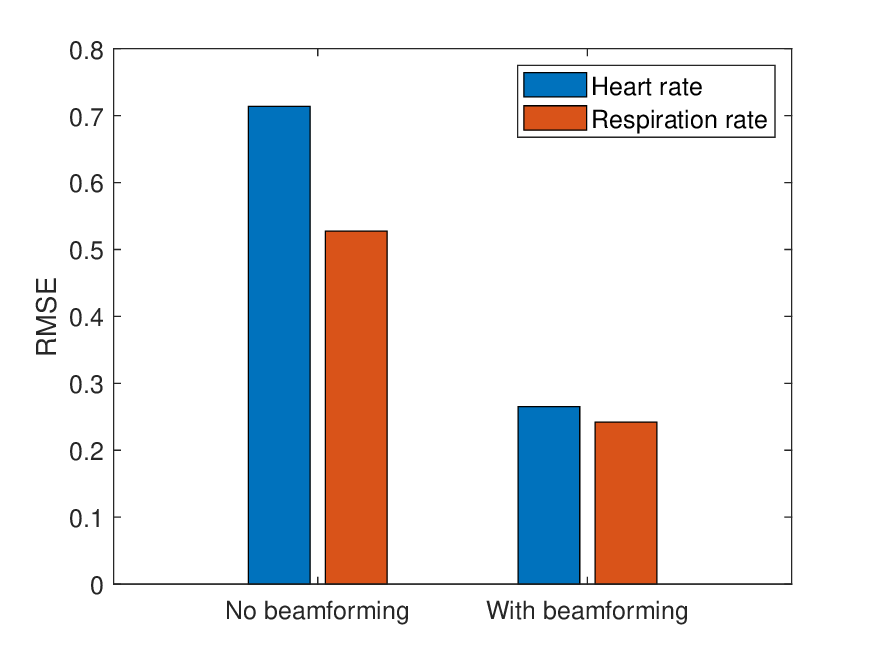}
    \caption{Effect of beamforming on heartbeat and respiration rate estimation.}
    \label{vital sign error}
\end{figure}

Fig.~\ref{vital sign error} evaluates the impact of beamforming on the accuracy of heart rate and respiration rate estimations. The results clearly demonstrate that beamforming significantly reduces estimation error for both physiological parameters. Specifically, for heart rate estimation, beamforming leads to a nearly 60\% reduction in RMSE. A comparable improvement is observed in respiration rate estimation, where beamforming reduces the RMSE by approximately 55\%. These findings underscore the effectiveness of beamforming in enhancing the precision of vital sign monitoring.

\begin{table*}[!t]
    \centering
    \caption{Heart rate and respiration rate estimation error with single patient and multiple patients.}
    \label{Table II}
        \begin{tabular}{ |c|c|c|c|c| }
        \hline
         & Noise $\sigma_e^2$ & Patient Number & $\text{RMSE}_{h,l}$ (BPM) & $\text{RMSE}_{r,l}$ (RPM)\\
        \hline
        \hline
        Single Patient (Joint design) & $10^{-3}$ & Patient 1 &  0.538553 & 0.774975 \\
        \hline
        Single Patient (Sensing-only design) & $10^{-3}$ & Patient 1 &  0.386541 & 0.502827 \\
        \hline
        & & Patient 1 & 1.344375 & 0.968246 \\
        Multiple Patients (Using VMD \& IMFs) & $10^0$ & Patient 2 & 1.486659 & 0.642626 \\
        & & Patient 3 & 1.253420 & 1.445899  \\
         \hline
         & & Patient 1 & 0.810093 & 0.968246 \\
        Multiple Patients (Using VMD \& IMFs) & $10^{-3}$ & Patient 2 & 1.486659 & 0.642626 \\
        & & Patient 3 & 1.130528 & 1.310713 \\
         \hline
        & & Patient 1 & 3.303053 & 2.432688 \\
        Multiple Patients (FFT only) & $10^0 $& Patient 2 & 1.486659 & 3.128573 \\
        & & Patient 3 & 4.409326 & 2.035045\\
        \hline
        & & Patient 1 & 3.303053 & 2.432688 \\
        Multiple Patients (FFT only) & $10^{-3} $& Patient 2 & 1.486659 & 3.128573 \\
        & & Patient 3 & 3.031213 & 1.310713 \\
         \hline
        \end{tabular}
\end{table*}

Table~\ref{Table II} compares the HR and RR estimation performance for single and multi-patient scenarios, evaluated using RMSE. In the single-patient case, Patient 1’s ground truth values are 72 BPM and 15 RPM. The benchmark sensing-only design achieves RMSEs of 0.3866 BPM and 0.5028 RPM. The proposed joint design achieves an RMSE of 0.5386 BPM and 0.7749 RPM. While the RR error increases by approximately 54\% relative to the sensing-only design, the relative errors remain low (0.75\% for HR and 5.17\% for RR), indicating that the joint design preserves strong estimation accuracy. In the multi-patient scenario, inter-user interference leads to performance degradation. Using VMD combined with IMFs to separate composite signals, Patient 1 achieves HR and RR RMSEs of 0.8101 BPM and 0.9682 RPM, corresponding to relative errors of 1.13\% and 6.45\%, respectively. Patient 2, with ground truth values of 80.7 BPM and 18.87 RPM, obtains RMSEs of 1.4867 BPM and 0.6426 RPM (relative errors of 1.84\% and 3.41\%). Patient 3’s HR and RR are 63.47 BPM and 10.80 RPM, with RMSEs of 1.1305 BPM and 1.3107 RPM (relative errors of 1.97\% and 12.14\%). Compared to the FFT-only method, the VMD \& IMFs approach delivers substantial accuracy gains across all patients, highlighting its effectiveness in mitigating multi-user interference. Using the VMD \& IMFs method, when the noise variance increases from $10^{-3}$ to $10^0$, a significant degradation in HR estimation is observed, while RR estimation experiences a smaller performance drop. This difference arises because the amplitude of HR signals is much smaller than that of RR signals, making HR more susceptible to environmental noise. In contrast, the FFT-based method is less sensitive to noise variations but yields substantially lower estimation accuracy overall.

\section{Conclusion and Future Directions}

In this paper, we have proposed an integrated framework that combines sensing, computing, and semantic communication for e-health applications, with a particular focus on vital sign monitoring. The system comprises a service robot, multiple healthcare devices, and several patients. The service robot employs IMM filters to track patient movements and design predictive sensing beamformers accordingly. To enhance data efficiency and protect privacy, we have applied semantic extraction to select representative information from the sensing data. We have evaluated the system performance using key metrics such as semantic computing power and the CRB. We have formulated a joint optimisation problem involving beamforming and semantic extraction ratio, addressing its non-convex nature through a combination of bounding techniques and the bisection method. Simulation results have demonstrated that the proposed framework and algorithms outperform conventional joint sensing and communication methods by achieving higher sensing accuracy, improved semantic transmission efficiency, and enhanced privacy preservation.

As future directions, hardware implementation of the proposed system represents a promising path for practical eHealth applications. Incorporating machine learning techniques can further enhance the robustness of vital sign extraction in noisy and reflective environments. Additionally, integrating complementary sensing modalities, such as cameras, may provide more benefits to system performance. Finally, evaluating the proposed approach within multi-robot or cooperative robotic systems would broaden its applicability and impact in eHealth scenarios.

\appendices

\section{Proof of Proposition 1}
\label{1Appendix}
\setcounter{equation}{0}
\numberwithin{equation}{section}
Through considering the Lagrange multiplier $\eta$, problem \eqref{eq52} can be transferred to 
\begin{subequations}\label{appendixeq1}
\begin{align}
    \max_{\rho_k} &  \sum_{k=1}^K  \frac{\iota}{\rho_k} D_k + \eta (F \sum_{k=1}^K \ln \rho_k - P_{c\&s} + P_t) \label{appendix1a}\\
    \text{s.t.} \; & \; \rho_{LB, k} \leq \rho_k \leq \rho_{UB, k}, \forall k. \label{appendix1b}
\end{align}
\end{subequations}

The maximum value of the function can be found by taking the first-order derivative. Since this is a multi-variable function, the partial derivative of each $\rho_k, k \in K$ is given by:
\begin{equation}\label{appendixeq2}
    \frac{\partial}{\partial \rho_k} = \frac{- \iota D_k}{\rho_k^2} + \frac{\eta F}{\rho_k}.
\end{equation}

Therefore, $\rho_k^* = \min \left( \max \left( \frac{\iota D_k}{\eta F}, \rho_{LB, k} \right) , \rho_{UB, k} \right)$ is found when $\frac{\partial}{\partial \rho_k} = 0$. When the optimal value of $\eta$, denoted by $\eta^*$, is found, the optimal value of $\rho$ is also found. The optimal value of $\eta$, denoted by $\eta^*$, can be found by using the bisection method. 

\bibliographystyle{ieeetr}
\bibliography{bib}

\end{document}